\newtheorem{thm}{Theorem}
\newtheorem{lem}[thm]{Lemma}
\newtheorem{defn}[thm]{Definition}
\newtheorem{observation}{Observation}
\newcommand{\non}{\mathnormal{\sim}}
\newcommand{\To}{\Rightarrow}
\newcommand{\TO}{\rightarrow}
\newcommand{\defeater}{\leadsto}
\newcommand{\set}[2][\relax]{\ensuremath{#1\{#2#1\}}}
\newcommand{\seq}[2][n]{#2_1,\dots,#2_{#1}}
\newcommand{\ds}{\displaystyle}
\newcommand{\LAB}{\ensuremath{\mathrm{Lbl}}\xspace}
\newcommand{\FACTS}{\ensuremath{\mathrm{F}}\xspace}
\newcommand{\PROP}{\ensuremath{\mathrm{PROP}}\xspace}
\newcommand{\LIT}{\ensuremath{\mathrm{Lit}}\xspace}
\newcommand{\MOD}{\ensuremath{\mathrm{MOD}}\xspace}
\newcommand{\MODLIT}{\ensuremath{\mathrm{ModLit}}\xspace}
\newcommand{\pr}{\ensuremath{\mathsf{Pr}}}
\newcommand{\op}{\ensuremath{\mathsf{Op}}}
\newcommand{\com}{\ensuremath{\mathsf{Com}}}
\newcommand{\Arg}{\ensuremath{\mathsf{arg}}}
\newcommand{\soc}{\ensuremath{\mathsf{soc}}}
\newcommand{\OBL}{\ensuremath{\mathsf{OBL}}\xspace}
\newcommand{\BEL}{\ensuremath{\mathsf{BEL}}\xspace}
\newcommand{\INT}{\ensuremath{\mathsf{INT}}\xspace}
\newcommand{\Convert}{\ensuremath{\mathrm{Cv}}\xspace}
\newcommand{\Conflict}{\ensuremath{\mathrm{Cf}}\xspace}
\newcommand{\Aconf}[2]{\mathrm{Cf}(#1,#2)}
\newcommand{\Aconv}[2]{\mathrm{Cv}(#1,#2)}
\newcommand{\pflat}{\ensuremath{\mathrm{pflat}}\xspace}
\newcommand{\fflat}{\ensuremath{\mathrm{flat}}\xspace}
\newcommand{\Pflat}[1]{\mathrm{pflat}(#1)}
\newcommand{\Flatulo}[1]{\mathrm{flat}(#1)}
\newcommand{\Flat}[2]{\mathrm{flat}(#1#2)}
\newcommand{\flatx}[2]{#1\_\mathrm{pflat}(#2)}
\newcommand{\str}{\ensuremath{\mathrm{s}}\xspace}
\newcommand{\sd}{\ensuremath{\mathrm{sd}}\xspace}
\newcommand{\de}{\ensuremath{\mathrm{d}}\xspace}
\newcommand{\dft}{\ensuremath{\mathrm{dft}}\xspace}
\title{Strategic Argumentation is NP-Complete}
\author{Guido Governatori\\
NICTA, Australia\\
\And Francesco Olivieri\\
Griffith Univ., Australia\\
NICTA, Australia\\
Verona Univ., Italy
\And Simone Scannapieco\\
Griffith Univ., Australia\\
NICTA, Australia\\
Verona Univ., Italy
\And Antonino Rotolo\\
Bologna Univ., Italy
\And Matteo Cristani\\
Verona Univ., Italy}
\begin{document}

\maketitle
\begin{abstract}
In this paper we study the complexity of strategic argumentation for dialogue games. A dialogue game is a 2-player game where the parties play arguments. We show how to model dialogue games in a skeptical, non-monotonic formalism, and we show that the problem of deciding what move (set of rules) to play at each turn is an NP-complete problem.
\end{abstract}

\section{Introduction and Motivation}\label{sec:introduction}

Over the years many dialogue games for argumentation have been proposed to
study questions such as which conclusions are justified, or how procedures for
debate and conflict resolution should be structured to arrive at a fair and
just outcome. We observed that the outcome of a debate does not solely
depend on the premises of a case, but also on the strategies that parties in a
dispute actually adopt. According to our knowledge, this aspect has not received the proper attention in the literature of the field.

Almost all the AI literature on the strategic aspects of argumentation (see
Section \ref{sec:related_work} for a brief overview) assumes to work with
argument games with complete information, i.e., dialogues where the structure of
the game is common knowledge among the players. Consider, however, the
following example due to \cite{Sato:2011} (which in turn modifies an example
taken from \cite{OkunoT09}):
\[
\begin{array}{ll}
p_0: & \text{``You killed the victim.''} \\
c_1: & \text{``I did not commit murder! There is no evidence!''}\\
p_1: & \text{``There is evidence. We found your ID card near the} \\
& \text{scene.''}\\
c_2: & \text{``It's not evidence! I had my ID card stolen!''}\\
p_2: & \text{``It is you who killed the victim. Only you were near}\\
& \text{the scene at the time of the murder.''}\\
c_3: & \text{``I didn't go there. I was at facility A at that time.''}\\
p_3: & \text{``At facility A? Then, it's impossible to have had }\\
& \text{your ID card stolen since
facility A does not allow a} \\
& \text{person to enter without an ID card.''}
\end{array}
\]

This dialogue exemplifies an argument game occurring in witness examinations in legal courts. The peculiarity of this game is the fact that the exchange of arguments reflects an asymmetry of information between the players: each player does not know the other player's knowledge, thus she cannot predict which arguments are attacked and which counterarguments are employed for attacking the arguments. Indeed, \cite{Sato:2011} points out, for instance, that $p_3$ attacks $c_2$, but only when $c_3$ is given: hence, the attack $p_3$ of the proponent is made possible only when the opponent discloses some private information with the move $c_3$.  

Despite the encouraging results offered by \cite{Sato:2011}, we argue that 
relaxing the complete-information assumption leads in general to non-tractable frameworks. In this paper, in particular, we explore the computational cost of argument games of incomplete information where the (internal) logical structure of arguments is considered. 

In this case relaxing complete information, such as when players do not share the same beliefs and set of arguments, simply amounts to the fact they have different logical theories, i.e., different sets of rules from which arguments supporting logical conclusions can be built. Hence, if the proponent, having a theory $T$, has the objective to prove that some $l$ is true, there is no obvious way for preferring an argument for $l$ obtained from the minimal subset of $T$ (which could at first sight minimise the chances of successful attacks from the opponent) over the maximal set of arguments obtained from the whole $T$ (which could at a first sight maximise the chances to defeat any counterarguments of the opponent).

The layout of the paper is as follows. Section \ref{sec:intuition} offers a
gentle introduction and motivation for our research problem. Section
\ref{sec:related_work} reviews relevant related work, thus presenting further
motivations behind our contribution. Section \ref{sec:logic} presents the logic
used for building arguments in dialogues (Argumentation Logic): it is a variant
of Defeasible Logic \cite{tocl} having linear complexity; another logic (Agent
Logic has linear complexity as well) is subsequently recalled from
\cite{jaamas:bio}. In this second logic, it is possible to formulate the
NP-complete ``Restoring Sociality Problem''; the objective is to prove that
this problem can be mapped into the problem of interest here, the so-called
``Strategic Argumentation Problem'', which rather consists in successfully
deciding for each player what move to play at each argument turn (thus showing
that the Strategic Argumentation Problem is NP-complete as well). Section
\ref{sec:dialogue_games} defines dialogue protocols for games of incomplete
information based on Argumentation Logic and formulates the Strategic
Argumentation Problem. Section \ref{sec:reduction} shows how to transform a
theory in Agent Logic into an equivalent one in Argumentation Logic, and
presents the main theorem of computational complexity for argument games.

\section{A Gentle Introduction to the Problem}\label{sec:intuition}
%
In the most typical forms of strategic argumentation, two players exchange arguments in a dialogue game: in the simplest case, a proponent (hereafter $\pr$) has the objective to prove a conclusion $l$ (a literal of the language) and an opponent (hereafter $\op$) presents counterarguments to the moves of $\pr$. If we frame this intuition in proof-theoretic settings, such as in those developed in \cite{GovernatoriMAB04,Prakken10,Toni13} where arguments are defined as inference trees formed by applying rules, exchanging arguments means exchanging logical theories (consisting of rules) proving conclusions. Assume, for instance, that the argument game is based on a finite set $F$ of indisputable facts and a finite set $R$ of rules: facts initially fire rules and this leads to building proofs for literals. 
%

If $R$ and $F$ are common knowledge of $\pr$ and $\op$, successful strategies in argument games are trivially identified: each player can compute if the entire theory (consisting of $F$ and $R$) logically entails $l$. In this situation the game consists of a single move. 

Suppose now that $F$ is known by both players, but $R$ is partitioned into three subsets: a set $R_{\com}$ known by both players and two subsets $R_{\pr}$ and $R_{\op}$ corresponding, respectively, to $\pr$'s and $\op$'s private knowledge (what $\pr$ and $\op$ privately know to be true). This scenario exemplifies an argument game of incomplete information. 
%
In this context, each player can use all rules belonging to her private knowledge ($R_{\pr}$ or $R_{\op}$) as well as all the public rules. These rules are not just the rules in $R_{\com}$ but also rules that, though initially belonging to the private information of other player, have been used in previous turns. 

Let us suppose to work with a skeptical non-monotonic framework, i.e., a logical machinery where, whenever two conflicting conclusions are obtainable from different rules, the system refrains to take a decision. Assuming a game where players has private and public knowledge, the problem of deciding what move (set of rules) to play at each turn amounts to establish whether there is any subset of her rules that can be successful. Is there any safe criterion to select successful strategies?

Consider the following three examples.

$\pr$ and $\op$ are debating about the truthfulness of a statement, we say $l$; $\pr$ is arguing that $l$ is the case, whilst $\op$ answers back the truthfulness of the opposite claim (henceforth $\neg l$). Each player has her own (private) arguments, not known by the opponent, but they both share the factual knowledge as well as some inference rules. Suppose $\pr$ has the following private arguments:

\[
\begin{array}{lccccccc}
	\mathcal{P}_{1}: & a & \To & b & \To & c & \To & l\\
	\mathcal{P}_{2}: & \neg b & \To & \neg e & \To & f & & \\
	\mathcal{P}_{3}: & \neg b & \To & \neg e & \To & g & & \\
	\mathcal{P}_{4}: & d & \To & c, &   &   &   &  \\	
\end{array}
\]
while $\op$ has
\[
\begin{array}{lccccc}
	\mathcal{O}_{1}: & a & \To & e & \To & \neg l\\
	\mathcal{O}_{2}: & d & \To & \neg b &  & \\
	\mathcal{O}_{3}: & f & \To & \neg l, & &\\
	
\end{array}
\]
where $F = \set{a, d}$ and $R_{\com}= \set{g \To \neg l}$. The notation used is to exemplify arguments as chains of rules. For instance, argument $\mathcal{P}_{1}$ implies that $R_{\pr}$ contains three rules $r_{1}: a \To b$, $r_{2}: b \To c$, and $r_{3}: c \To l$.

The point of the example being that if $\pr$ decides to announce all his
private arguments, then she is not able to prove her thesis $l$. Indeed, she
would not have counterarguments defeating $\mathcal{O}_{3}$ and
$R_{\com}$. If instead she argues with $\mathcal{P}_{1}$ and the subpart $\neg b \To \neg e$ of $\mathcal{P}_{2}$, keeping hidden from $\op$
the way to prove the premises $d$ of $\mathcal{O}_{2}$, then she proves $l$

Consider now this new setting:
\begin{align*}
    F  & = \{  a, d, f\} \\
    R_{\com} & =  \emptyset \\
    R_{\pr} & = \{  a\To b ,\quad  
                     d\To c ,\quad
                      c \To b \}\\
    R_{\op} & = \{  c\To e ,\quad
                        e,f\To \neg b \}
\end{align*} 
If $\pr$'s intent is to prove $b$ and she plays $\set{a\To b}$, then $\pr$ wins the game. However, if $\pr$ plays $\set{d\To c,\; c\To b}$ (or even $R_{\pr}$), this allows $\op$ to succeed. Here, a minimal subset of $R_{\pr}$ is successful. However, the situation (for similar reasons) can be reversed for $\pr$:
\begin{align*}
    F & = \{  a, d, f\} \\
    R_{\com} & =  \emptyset \\
    R_{\pr} & = \{  a\To b ,\quad
                   d\To \neg c \}\\
    R_{\op} &= \{d, c\To \neg b ,\quad
                       f\To c \}
\end{align*}
In this second case, the move $\set{a\To b}$ is not successful for $\pr$, while playing with the whole $R_{\pr}$ ensures victory.

In the remainder of this paper, we will study this research question in the 
context of Defeasible Logic.  We will show that the problem of deciding what 
set of rules to play (Strategic Argumentation Problem) at a given move 
is NP-complete even when the problem of deciding whether a given theory 
(defeasibly) entails a literal can be computed in polynomial time. 
We will map the NP-complete Restoring Sociality Problem proposed in
\cite{jaamas:bio} into the Strategic Argumentation Problem. To this end, we
first propose a standard Defeasible Logic to formalise the argumentation
framework (Subsection~{\ref{subsec:argumentation_logic}) and then we present
the BIO agent defeasible logic (Subsection~\ref{subsec:agent_logic}). Finally,
in Section~\ref{sec:reduction} we show how to transform an agent defeasible
logic into an equivalent argumentation one and we present the main theorem of
computational complexity.




\section{Related Work}\label{sec:related_work}

Despite the game-like character of arguments and debates, game-theoretic investigations of argumentation are still rare in the AI argumentation literature and in the game theory one as well (an exception in this second perspective is \cite{GlazerR01}). 

Most existing game-theoretic investigations of argumentation in AI, such as \cite{Procaccia:2005,toni-jelia:2008,COMMA:2008,Rahwan:2009,Grossi:2013} proceed within Dung's abstract argumentation paradigm, while \cite{ICAIL:2007}, though working on argumentation semantics related with Dung's approach, develop a framework where also the logical internal structure of arguments is made explicit.

\cite{toni-jelia:2008} presents a notion of argument strength within the class of games of strategy. The measure of the strength of an argument emerges from confronting
proponent and opponent via a repeated game of argumentation strategy such that the payoffs reflect the long term interaction between  proponent and opponent strategies.

Other types of game analyses have been used for argumentation. In particular, argumentation games have been reconstructed as two-player extensive-form games of perfect information \cite{Procaccia:2005,COMMA:2008,Grossi:2013}.  (For a discussion on using extensive-form games, see also \cite{Rahwan:2009}.) While \cite{Grossi:2013} works on zero-sum games, \cite{COMMA:2008} does not adopt this view because preferences over outcomes are specified in terms of expected utility combining the probability of success of arguments (with respect to a third party, an adjudicator such as a judge) with the costs and benefits associated to arguments, thus making possible that argument withdrawn be the most preferred option. Besides this difference, in both approaches uncertainty is introduced  due to different probabilities of success depending on a third party, such as external audience or a judge, whose attitude towards the arguments exchanged by proponent and opponent is uncertain. 

All these works assume that argument games have complete information, which, we noticed, is an oversimplification is many real-life contexts (such as in legal disputes). How to go beyond complete information? In game-theoretic terms, one of the simplest ways of analyzing argument games of incomplete information is to frame them as 
\emph{Bayesian extensive games with observable actions} \cite[chap.~12]{Osborne99}: 
this is possible because every player observes the argumentative move of the other player and uncertainty only derives from an initial move of chance that distributes (payoff-relevant) private information among the players corresponding to logical theories: hence, chance selects types for the players by assigning to them possibly different theories from the set of all possible theories constructible from a given language. If this hypothesis is correct, notice that (i) Bayesian extensive games with observable actions allow to simply extend the argumentation models proposed, e.g., in \cite{COMMA:2008,Grossi:2013}, and  
(ii) the probability distributions over players' types can lead to directly measuring the probability of justification for arguments and conclusions, even when arguments are internally analyzed \cite{AILaw:2012}. Despite this fact, however, complexity results for Bayesian games are far from encouraging (see \cite{GottlobGM07a} for games of strategy). If we move to Bayesian extensive games with observable actions things not encouraging, too. Indeed,  we guess that considerations similar to those presented by \cite{ChalkiadakisB07} can be applied to argument games: the calculation of the perfect Bayesian equilibrium solution can be tremendously complex due to both the size of the strategy space (as a function of the size of the game tree, and it can be computationally hard to compute it \cite{DimopoulosNT02}), and the dependence between variables representing strategies and players' beliefs. A study of these game-theoretical issues cannot be developed here and is left to future work: this paper, instead, considers a more basic question: the computational problem of exploring solutions in the logical space of strategies when arguments have an internal structure. 

In this sense, this contribution does not directly develop any game-theoretic analysis of argumentation games of incomplete information, but it offers results about the computation cost for logically characterizing the problems that any argumentation game with incomplete information potentially rises. Relevant recent papers that studied argumentation of incomplete information without any direct game-theoretic analysis are \cite{OkunoT09} and \cite{Sato:2011}, which worked within the paradigm of abstract argumentation. The general idea in these works is to devise a system for dynamic argumentation games where agents' knowledge bases can change and where such changes are precisely caused by exchanging arguments. \cite{OkunoT09} presents a first version of the framework and an algorithm, for which the authors prove a termination result. \cite{Sato:2011} generalizes this framework (by relaxing some constraints) and devises a computational method to decide which arguments are accepted
by translating argumentation framework into logic programming; this further result, however, is possible only when players are eager to give all the arguments, i.e., when proponent and opponent eventually give all possible arguments in the game.

\section{Logic} 
\label{sec:logic}

In this section we shall introduce the two logics used in this paper. The first
is the logic used in a dialogue game. This is the logic to represent the
knowledge of the players, the structure of the arguments, and perform
reasoning. We call this logic ``Argumentation logic'' and we use the Defeasible
Logic of \cite{tocl}. \cite{GovernatoriMAB04} provides the relationships
between this logic (and some of its variants) and abstract argumentation, and
\cite{ThakurGPL07} shows how to use this logic for dialogue games. The second
logic, called Agent Logic, is the logic in which the ``restoring sociality
problem'' (a known NP-completed problem) \cite{jaamas:bio} was formulated. It
is included in this paper to show how to reduce the restoring sociality problem
into the strategic argumentation problem, proving thus that the later is also
an NP-complete problem. The Agent Logic is an extension of Defeasible Logic
with modal operators for Beliefs, Intentions and Obligations \cite{jaamas:bio}.

Admittedly, this section takes a large part of this paper, but is required to
let the reader comprehend the mechanisms behind our demonstration of
NP-completeness.

\subsection{Argumentation Logic} 
\label{subsec:argumentation_logic}

A defeasible argumentation theory is a standard defeasible theory consisting of
a set of \emph{facts} or indisputable statements, a set of rules, and a
\emph{superiority relation} $>$ among rules saying when a single rule may
override the conclusion of another rule. We have that $\phi_{1},\dots ,\phi_{n}
\TO \psi$ is a {\em strict rule} such that whenever the premises
$\phi_{1},\dots ,\phi_{n}$ are indisputable so is the conclusion $\psi$. A
\emph{defeasible rule} $\phi_{1},\dots ,\phi_{n}\To \psi$ is a rule that can be
defeated by contrary evidence. Finally, $\phi_{1},\dots ,\phi_{n}\defeater
\psi$ is a \emph{defeater} that is used to prevent some  conclusion but cannot be used to draw any conclusion.

\begin{defn}[Language]\label{def:languageArg} 
	Let $\PROP$ be a set of
propositional atoms and $\LAB_{arg}$ be a set of labels. Define:
\begin{description}
\item[\emph{Literals}]
\[
\LIT=\PROP\cup \{ \neg p | p\in\PROP\}
\]
If $q$ is a literal, $\non q$ denotes the complementary literal (if
$q$ is a positive literal $p$ then $\non q$ is $\neg p$; and if $q$ is
$\neg p$, then $\non q$ is $p$);
\item[\emph{Rules}]
\[
r: \seq{\phi}\hookrightarrow\psi, \\
\]
where $r\in \LAB_{arg}$ is a unique label, $A(r)=\set{\seq\phi}\subseteq
\LIT_{arg}$ is the antecedent of $r$, $C(r)=\psi\in \LIT_{arg}$ is the
consequent of $r$, and $\hookrightarrow\in \set{\TO, \To, \defeater}$ is the
type of $r$. \end{description} We use $R[q]$ to indicate all rules with
consequent $q$. We denote the sets of strict, rules, strict and defeasible
rules, and defeaters with $R_{\str}$, $R_{\de}$, $R_{\sd}$, and $R_{\dft}$,
respectively.
\end{defn}

\begin{defn}[Defeasible Argumentation Theory]
\label{def:strategic-theory}
A defeasible argumentation theory is a structure
\[
D_{arg}=(F, R,>)
\]
where
\begin{itemize}
\item $F \subseteq \LIT$ is a finite set of facts;
\item $R$  is the finite set of rules;
\item The superiority relation $>$ is acyclic, irreflexive, and asymmetric.
\end{itemize}
\end{defn}

\begin{defn}[Proofs]\label{def:Proofs}
  Given an agent theory $D$, a proof $P$ of length $n$ in $D$ is a finite  sequence $P(1),\ldots,P(n)$ of labelled formulas of the type $+\Delta q$,
  $-\Delta q$, $+\partial q$ and $-\partial q$, where the proof
  conditions defined in the rest of this section hold. $P(1..n)$ denotes the initial part of the derivation of length $n$.
\end{defn}

We start with some terminology. 
%
\begin{defn}\label{def:provableArg}
Given $\#\in\{\Delta,\partial\}$ and a proof $P$ in $D$, a literal $q$ is $\#$-\emph{provable} in $D$ if there is a line $P(m)$ of $P$ such that  $P(m)=+\# q$. A literal $q$ is $\#$-\emph{rejected} in $D$
if there is a line $P(m)$ of $P$ such that $P(m)=-\# q$.
\end{defn}
%
The definition of $\Delta$ describes just forward chaining of
strict rules:
\begin{center}
\begin{minipage}{.5\textwidth}%
\begin{tabbing}
$+\Delta$: \= If $P(n+1)=+\Delta q$ then  \+\\
  (1) $q\in \FACTS$ or\\
  (2) $\exists r\in R_{\str}[q]$ s.t. $\forall a \in A(r).~a$ is
      $\Delta$-provable.\-\\
$-\Delta$: \= If $P(n+1)=-\Delta q$ then \+\\
 (1) $q\notin \FACTS$ and\\
 (2)  $\forall r\in R_{\str}[q].~\exists a \in A(r)$ s.t. $a$ is
      $\Delta$-rejected.
\end{tabbing}
\end{minipage}
\end{center}
For a literal $q$ to be definitely provable either is a fact, or there is a
strict rule with head $q$, whose antecedents have all been definitely proved
previously. And to establish that $q$ cannot be definitely proven we must
establish that every strict rule with head $q$ has at least one antecedent is
definitely rejected.

The following definition is needed to introduce the defeasible provability.

\begin{defn}
A rule $r\in R_{sd}$ is \emph{applicable} in the proof condition for
$\pm\partial$ iff $\forall a\in A(r)$, $+\partial a\in P(1..n)$. A rule $r$ is
\emph{discarded} in the condition for $\pm\partial$ iff $\exists a\in A(r)$
such that $-\partial a\in P(1..n)$.	
\end{defn}

\begin{center}
\begin{minipage}{.3\textwidth}
\begin{tabbing}
  $+\partial$: If $P(n+1)=+\partial q$ then\\
  (1)\= $+\Delta q\in P(1..n)$ or \\
  (2) \= (2.1) \= $-\Delta\non q\in P(1..n)$ and\\
  \> (2.2) $\exists r\in R_{\sd}[q]$ s.t. $r$ is applicable,
        and \\
  \> (2.3) $\forall s\in R[\non q].$ either
        $s$ is discarded, or \\
      \>\> (2.3.1) \= $\exists t\in R[q]$ s.t. $t$ is applicable and
      $t>s$.\\
%
  $-\partial$: If $P(n+1)=-\partial q$ then\+\\
  (1) \= $-\Delta_X q\in P(1..n)$ and either\+\\
      (2.1) $+\Delta\non q\in P(1..n)$ or\\
      (2.2) \= $\forall r\in R_{\sd}[q].$ either $r$ is discarded, or\\
      (2.3) $\exists s\in R[\non q]$ s.t. $s$ is applicable, and\+\\
            (2.3.1) \= $\forall t\in R[q].$ either $t$ is discarded, or
                    $t\not> s$.
\end{tabbing}
\end{minipage}
\end{center}
%
%
%
To show that $q$ is defeasibly provable we have two choices: (1) We show that
$q$ is already definitely provable; or (2) we need to argue using the
defeasible part of a theory $D$. For this second case, $\non q$ is not
definitely provable (2.1), and there exists an applicable strict or defeasible
rule for $q$ (2.2). Every attack $s$ is either discarded (2.3), or defeated by
a stronger rule $t$ (2.3.1). $-\partial_X q$ is defined in an analogous manner
and follows the principle of \emph{strong negation} which is closely related to
the function that simplifies a formula by moving all negations to an inner most
position in the resulting formula, and replaces the positive tags with the
respective negative tags, and the other way around \cite{ecai2000-5}.

\subsection{Agent Logic}\label{subsec:agent_logic}

A defeasible agent theory is a standard defeasible theory enriched with 1) modes for rules, 2) modalities (belief, intention, obligation) for literals, and 3) relations for conversions
and conflict resolution. We report below only the distinctive features. For a detailed exposition see \cite{jaamas:bio}.

\begin{defn}[Language]
\label{def:language} Let $\PROP$  and $\LIT$ be a set of
propositional atoms and literals as in Definition~\ref{def:languageArg}, $\MOD=\{\BEL, \INT,\OBL \}$ be the set of modal
operators, and $\LAB_{\soc}$ be a set of labels. Define:
\begin{description}
\item[\emph{Modal literals}]
\[
\MODLIT= \set{Xl|l\in \LIT, X\in \set{\OBL, \INT}};
\]
\item[\emph{Rules}]
\[
r: \seq{\phi}\hookrightarrow_{X}\psi, \\
\]
where $r\in \LAB_{\soc}$ is a unique label, $A(r)=\set{\seq\phi}\subseteq \LIT
\cup \MODLIT$ is the antecedent of $r$, $C(r)=\psi\in \LIT$ is the consequent
of $r$, $\hookrightarrow\in \set{\TO, \To, \defeater}$ is the type of $r$, and
$X\in \MOD$ is the mode of $r$. \end{description} $R^{X}$ ($R^{X}[q]$) denotes
all rules of mode $X$ (with consequent $q$), and $R[q] = \bigcup_{X \in
\set{\BEL, \OBL, \INT}} R^{X}[q]$.
\end{defn}

\begin{observation}
	Rules for intention and obligation are meant to introduce modalities: for
example, if we have the intention rule $r: a \To_{\INT} b$ and we derive $a$,
then we obtain $\INT b$. On the contrary, belief rules produce literals and not
modal literals.
\end{observation}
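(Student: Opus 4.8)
The plan is to derive the Observation directly from the modal proof conditions of the Agent Logic of \cite{jaamas:bio}, which are obtained from the $\pm\partial$ conditions of Subsection~\ref{subsec:argumentation_logic} by indexing defeasible provability with a modal operator $X\in\MOD$. First I would recall the intended reading of the tags: $+\partial_{\OBL}q$ asserts that $\OBL q$ is defeasibly provable and $+\partial_{\INT}q$ that $\INT q$ is, whereas $+\partial_{\BEL}q$ asserts only that the \emph{literal} $q$ is defeasibly provable. This asymmetry is already visible in Definition~\ref{def:language}: the set $\MODLIT$ of modal literals is built from $\OBL$ and $\INT$ only, so there simply is no object ``$\BEL q$'' for a belief rule to introduce.

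For the first half of the claim, fix a rule $r:\seq\phi\hookrightarrow_{X}\psi$ with $X\in\set{\INT,\OBL}$ and assume its antecedents are established (each $\phi_i\in\LIT\cup\MODLIT$ is provable with the appropriate tag, possibly via the conversion relation \Convert) so that $r$ is applicable. Running the $+\partial_{X}$ condition, the modal analogue of clause (2.2) is satisfied by $r$ itself; if in addition every competing rule — those singled out by the conflict relation \Conflict{} for literals conflicting with $\psi$ — is discarded or beaten by a stronger rule for $\psi$ (the modal forms of clauses (2.3)/(2.3.1)), we conclude $+\partial_{X}\psi$, i.e.\ $X\psi$. Taking $r:a\To_{\INT}b$ together with a derivation of $a$ and no competing rules instantiates exactly the example in the statement: we obtain $+\partial_{\INT}b$, that is $\INT b$.

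For the second half, I would note that the belief proof condition $\pm\partial_{\BEL}$ specialises to the ordinary $\pm\partial$ of Subsection~\ref{subsec:argumentation_logic}: belief rules play the role of the ``factual'' rules of the theory, and a successful application of $r:\seq\phi\To_{\BEL}\psi$ yields $+\partial\psi$, the literal $\psi$, never a modalised conclusion. Hence belief rules produce literals and not modal literals, which completes the Observation.

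The step that needs the most care is not a genuine obstacle but the bookkeeping imported from \cite{jaamas:bio}: one must state precisely how \Convert{} feeds rules of one mode into derivations at another mode, and how \Conflict{} selects the attacking rules in the modal analogues of clauses (2.3)/(2.3.1), since the excerpt spells out only the non-modal conditions. Once those definitions are reinstated, the Observation is immediate by inspection — it records a design choice of the logic rather than a property requiring real argument.
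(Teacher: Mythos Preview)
The paper offers no proof of this Observation: it is stated as a bare remark explaining a design choice of the Agent Logic, with no accompanying argument. Your proposal is therefore not so much a different route as an actual justification where the paper gives none. What you write is correct and well-grounded: the key structural point---that $\MODLIT$ in Definition~\ref{def:language} is built only from $\OBL$ and $\INT$, so there is no object $\BEL q$ for a belief rule to introduce---together with the reading of the tagged conclusions in Definition~\ref{def:provable} (clause~1 versus clause~2) is exactly what makes the Observation hold. Your own closing sentence already diagnoses the situation accurately: this records a design choice rather than a theorem, and the paper treats it as such.
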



\subsubsection{Rule conversion} It is sometimes meaningful to use rules for a
modality $Y$ as they were for another modality $X$, i.e., to convert one mode
of conclusions into a different one. Formally,
we define the asymmetric binary convert relation $\Convert \subseteq \MOD
\times \MOD$ such that $\Aconv{Y}{X}$ means `a rule of mode $Y$ can be used
also to produce conclusions of mode $X$'. This corresponds to the
following rewriting rule:
\[
\frac{\ds \seq{X a} \quad A(r)=\seq{a} \To_{Y} b}
{\ds X b} \mbox{\; $\Aconv{Y}{X}$} 
\]
where $A(r)\neq \emptyset$ and $A(r)\subseteq \LIT$.

\subsubsection{Conflict-detection/resolution}
We define an asymmetric binary conflict relation $\Conflict \subseteq \MOD \times \MOD$ such that $\Aconf{Y}{X}$ means `modes $Y$ and $X$ are in conflict and mode $Y$ prevails over $X$'.
\begin{defn}[Defeasible Agent Theory]
\label{def:agent-theory}
A defeasible agent theory is a structure
\[
D_{\soc}=(F_{\soc}, R^{\BEL}, R^{\INT}, R^{\OBL},>_{\soc}, \mathcal{V}, \mathcal{F})
\]
where
\begin{itemize}
\item $F_{\soc}\subseteq \LIT\cup \MODLIT$ is a finite set of facts;
\item $R^{\BEL}$, $R^{\OBL}$, $R^{\INT}$  are three finite sets of rules for beliefs, obligations, and intentions;
\item The superiority (acyclic) relation $>_{\soc}=>^{sm}_{\soc}\cup >^{\Conflict}_{\soc}$ such that: i. $>^{sm}_{\soc} \subseteq R^{X}\times R^{X}$ such that if $r>_{\soc}s$ then $r\in R^X[p]$ and $s\in R^{X}[\non p]$; and ii. $>^{\Conflict}_{\soc}$ is such that $\forall r\in R^Y[p], \forall s\in R^X[\non p]$ if $\Aconf{Y}{X}$ then $r>_{\soc}^{\Conflict}s$.

\item $\mathcal{V} = \set{\Aconv{\BEL}{\OBL},\Aconv{\BEL}{\INT}}$ is a set of convert relations;
\item $\mathcal{F} = \set{\Aconf{\BEL}{\OBL},\Aconf{\BEL}{\INT},\Aconf{\OBL}{\INT}}$ is a set of conflict relations.
\end{itemize}
\end{defn}

A proof is now a finite sequence of labelled formulas of the type $+\Delta_X q$,
  $-\Delta_X q$, $+\partial_X q$ and $-\partial_X q$.

The following
definition states the special status of belief rules, and that the
introduction of a modal operator corresponds to being able to derive
the associated literal using the rules for the modal operator.
\begin{defn}\label{def:provable}
Given $\#\in\{\Delta,\partial\}$ and a
  proof $P$ in $D$, $q$ is $\#$-\emph{provable} in $D$
  if there is a line $P(m)$ of $P$ such that either
\begin{enumerate}
\item $q$ is a literal and $P(m)=+\#_{\BEL}q$, or
\item $q$ is a modal literal $Xp$ and $P(m)=+\#_Xp$, or
\item $q$ is a modal literal $\neg Xp$ and $P(m)=-\#_Xp$.
\end{enumerate}
Instead, $q$ is $\#$-\emph{rejected} in $D$
if 
\begin{enumerate}\setcounter{enumi}{3}
\item $q$ is a literal and $P(m)=-\#_{\BEL}q$ or
\item $q$ is a modal literal $Xp$ and $P(m)=-\#_Xp$, or
\item $q$ is a modal literal $\neg Xp$ and $P(m)=+\#_Xp$.
\end{enumerate}
\end{defn}
%
We are now ready to report the definition of $\Delta_X$.
\begin{center}
\begin{minipage}{.5\textwidth}%
\begin{tabbing}
$+\Delta_X$: \= If $P(n+1)=+\Delta_X q$ then  \+\\
  (1) $q\in F$ if $X=\BEL$ or $Xq\in F$ or\\
  (2) $\exists r\in R^X_s[q]$ s.t. $\forall a \in A(r).~a$ is
      $\Delta$-provable or\\
  (3) \=$\exists r\in R^{Y}_s[q]$ s.t. $\Aconv{Y}{X}\in\mathcal{C}$ and\+\\
  $\forall a  \in A(r).~Xa$ is $\Delta$-provable.\-\-\\
$-\Delta_X$: \= If $P(n+1)=-\Delta_X q$ then \+\\
 (1) $q\notin F$ if $X=\BEL$ and $Xq\notin F$ and\\
 (2)  $\forall r\in R^{X}_s[q].~\exists a \in A(r)$ s.t. $a$ is
      $\Delta$-rejected and\\
 (3) \=$\forall r\in R^Y_s[q].$ if $\Aconv{Y}{X}\in\mathcal{C}$ then\+\\
  	$\exists a \in A(r)$ s.t. $Xa$ is $\Delta$-rejected.
\end{tabbing}
\end{minipage}
\end{center}
The sole difference with respect to $+\Delta$ is that now we may use rule of a
different mode, namely $Y$, to derive conclusions of mode $X$ through the
conversion mechanism. In this framework, only belief rules may convert to other
modes. That is the case, every antecedent of the belief rule $r\in R^{Y}$ in
clause (3) must be (definitely) proven with modality $X$.

We reformulate definition of being applicable/discarded, taking now into account also \Convert and \Conflict relations. 


\begin{defn}\label{def:applicable}

  Given a proof $P$ and $X,Y,Z\in \MOD$

\begin{itemize}

%
\item A rule $r$ is \emph{applicable} in the
  proof condition for $\pm\partial_{X}$ iff
\begin{enumerate}
\item  $r\in R^{X}$ and $\forall a\in A(r)$, $a$ is $\partial$-provable, or
\item $r\in R^Y$, $\Aconv{Y}{X}\in\mathcal{C}$, and $\forall a\in
  A(r)$, $Xa$ is $\partial$-provable.
\end{enumerate}
\item A rule $r$ is \emph{discarded} in the condition for
  $\pm\partial_{X}$ iff
  \begin{enumerate}\setcounter{enumi}{2}
  \item $r\in R^X$ and $\exists a\in A(r)$ such that $a$ is
    $\partial$-rejected; or
  \item $r\in R^Y$ and, if $\Aconv{Y}{X}$, then $\exists a\in A(r)$ such that $Xa$ is $\partial$-rejected, or
  \item $r\in R^Z$ and either $\neg \Aconv{Z}{X}$ or
    $\neg\Aconf{Z}{X}$.
  \end{enumerate}
\end{itemize}
\end{defn}

We are now ready to provide proof conditions for $\pm\partial_{X}$:

\begin{center}
%

\begin{minipage}{.3\textwidth}
\begin{tabbing}
  $+\partial_X$: If $P(n+1)=+\partial_X q$ then\\
  (1)\= $+\Delta_X q\in P(1..n)$ or \\
  (2) \= (2.1) \= $-\Delta_X\non q\in P(1..n)$ and\\
  \> (2.2) $\exists r\in R_{sd}[q]$ s.t. $r$ is applicable,
        and \\
  \> (2.3) $\forall s\in R[\non q]$ either
        $s$ is discarded, or \\
      \>\> (2.3.1) \= $\exists t\in R[q]$ s.t. $t$ is applicable and
      $t>s$, and\\
             \>\>\>either $t,s\in R^{Z}$, or $\Aconv{Y}{X}$ and $t\in R^{Y}$\\
 \\
%
  $-\partial_X$: If $P(n+1)=-\partial_X q$ then\+\\
  (1) \= $-\Delta_X q\in P(1..n)$ and either\+\\
      (2.1) $+\Delta_X\non q\in P(1..n)$ or\\
      (2.2) \= $\forall r\in R_{sd}[q]$, either $r$ is discarded, or\\
      (2.3) $\exists s\in R[\non q]$, s.t. $s$ is applicable, and\+\\
            (2.3.1) \= $\forall t\in R[q]$ either $t$ is discarded, or
                    $t\not> s$, or \\
             \> $t\in R^{Z},s\in R^{Z'}$, $Z\neq Z'$ and, \\
			 \>if $t\in R^{Y}$ then $\neg\Aconv{Y}{X}$.
\end{tabbing}
\end{minipage}
\end{center}
Again, the only difference with respect to $+\partial$ is that we have rules
for different modes, and thus we have to ensure the appropriate relationships
among the rules. Hence, clause (2.3.1) prescribes that either attack rule $s$
and counterattack rule $t$ have the same mode (i.e., $s,t\in R^{Z}$), or that
$t$ can be used to produce a conclusion of the mode $X$ (i.e., $t\in R^{Y}$ and
$\Aconv{Y}{X}$). Notice that this last case is reported for the sake of
completeness but it is useless in our framework since it plays a role only
within theories with more than three modes.

Being the strong negation of the positive counterpart, $-\partial_X q$ is
defined in an analogous manner.

We define the \emph{extension} of a defeasible theory as the set of all
positive and negative conclusions. In
\cite{DBLP:journals/tplp/Maher01,jaamas:bio}, authors proved that the extension
calculus of a theory in both argumentation and agent logic is linear in the
size of the theory.

Let us introduce some preliminary notions, which are needed for formulating the ``restoring sociality problem'' \cite{jaamas:bio} (and recalled below).

\begin{itemize}
  \item Given an agent defeasible theory $D$, a literal $l$ is \emph{supported} in $D$ iff
  there exists a rule $r\in R[l]$ such that $r$ is applicable,
  otherwise $l$ is not supported. For $X\in\MOD$ we use $+\Sigma_Xl$
  and $-\Sigma_{X}l$ to indicate that $l$ is supported / not supported
  by rules for $X$.
 \item \emph{Primitive intentions} of an agent are those intentions given as
facts in a theory.
 \item \emph{Primary} intentions and obligations are those derived using only rules for intentions and obligations (without any rule conversion).
\item A \emph{social agent} is an agent for which obligation rules are stronger than any conflicting intention rules but weaker than any conflicting belief rules. 
\end{itemize}

\subsection{Restoring Sociality Problem}\label{sec:restoring}
\begin{quote}
  \textsc{Instance}:\\
  Let $I$ be a finite set of primitive intentions, $\OBL p$ a primary
  obligation, and $D$ a theory such that $I\subseteq F$, $D\vdash
  -\partial_{\OBL}p$, $D\vdash -\Sigma_{\OBL}\non p$, $D\vdash
  +\partial_{\INT}\non p$, $D\vdash+\Sigma_{\OBL}p$ and
  $D\vdash-\Sigma_{\BEL}\non p$.

\textsc{Question:}\\
Is there a theory $D'$ equal to $D$ apart from containing only a
proper subset $I'$ of $I$ instead of $I$, such that $\forall q$ if
$D\vdash+\partial_{\OBL}q$ then $D'\vdash\partial_{\OBL}q$ and
$D'\vdash+\partial_{\OBL}p$?
\end{quote}

Let us the consider the theory consisting of
\begin{align*}
  F &=\set{\INT p, \INT s}\\
  R &=\set{r_1: p,s\To_{\BEL} q\quad r_2: {} \To_{\OBL} \non q\quad
           r_3: {}\To_{\BEL} s}\\
  > &=\set{r_1>r_2}
\end{align*}
$r_1$ is a belief rule and so the rule is stronger than the obligation
rule $r_2$. In addition we have that the belief rule is not applicable
(i.e., $-\Sigma_{\BEL}q$) since there is no way to prove
$+\partial_{\BEL} p$. There are no obligation rules for $q$, so
$-\partial_{\OBL}q$. However, rule $r_1$ behaves as an intention rule
since all its antecedent can be proved as intentions, i.e.,
$+\partial_{\INT}p$ and $+\partial_{\INT}s$. Hence, since $r_1$ is
stronger than $r_2$, the derivation of $+\partial_{\OBL}\non q$ is
prevented against the sociality of the agent.

The related decision problem is whether it is possible to avoid the
``deviant'' behaviour by giving up some primitive intentions,
retaining all the (primary) obligations, and maintaining a set of
primitive intentions 
as close as possible to the original set
of intentions.

\begin{thm}[\cite{jaamas:bio}]
 \label{th:NP-complete}
 The Restoring Sociality Problem is NP-complete.
\end{thm}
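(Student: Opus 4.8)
The plan is to prove the two halves of NP-completeness separately, and to carry the hardness direction by a reduction from \textsc{Sat}.

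\textbf{Membership in NP.} A natural certificate for a positive instance is the subset $I'\subsetneq I$ itself, which is of size linear in the input. To check it, form $D'$ from $D$ by replacing $I$ with $I'$ inside $F_\soc$ (a linear-time edit) and compute the full extensions of $D$ and of $D'$; since the extension of an agent defeasible theory is computable in time linear in the size of the theory, this is polynomial. It then remains to verify that (i) $+\partial_{\OBL}p$ belongs to the extension of $D'$ and (ii) every literal $q$ with $+\partial_{\OBL}q$ in the extension of $D$ also has $+\partial_{\OBL}q$ in the extension of $D'$; both checks are a linear scan. Hence the problem lies in NP.

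\textbf{NP-hardness.} Reduce from \textsc{Sat}; by a trivial padding (one fresh variable and a unit clause) we may assume the input CNF $\varphi=C_1\wedge\dots\wedge C_m$ over $x_1,\dots,x_n$ is not satisfied by the all-true assignment. I would build an agent theory $D_\varphi$ as follows: each variable $x_i$ gets an associated primitive intention, $I$ collects all of them, and a choice of $I'\subsetneq I$ is read as the truth assignment making $x_i$ true exactly when its intention is kept. Belief rules that may convert to intentions propagate these choices to literals standing for the variables and their negations. For each clause $C_j$ there is an obligation rule for an auxiliary literal $c_j$ together with a competing belief rule for $\non c_j$ whose body is precisely the set of literals falsifying $C_j$, arranged so that — via the $\Convert/\Conflict/$superiority interaction that lets a belief rule, once turned into an intention rule, override an obligation rule — $+\partial_{\OBL}c_j$ holds under $D'$ exactly when the induced assignment satisfies $C_j$. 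Finally the target obligation $p$ is derived by a rule that is blocked exactly when some $c_j$ fails to be obligation-provable. One then checks that $D_\varphi$ satisfies all five side conditions of the instance ($-\partial_{\OBL}p$, $-\Sigma_{\OBL}\non p$, $+\partial_{\INT}\non p$, $+\Sigma_{\OBL}p$, $-\Sigma_{\BEL}\non p$), that no other $\OBL$-conclusion of $D_\varphi$ depends on which intentions are dropped (so the preservation clause is automatically met), and hence that a suitable $I'$ exists iff $\varphi$ is satisfiable. The construction is clearly polynomial in $|\varphi|$.

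\textbf{The main obstacle} is entirely in the hardness gadgets. First, since the only admissible modification is the deletion of facts (primitive intentions), one cannot directly "set a variable to true": auxiliary, non-removable rules must impose a canonical behaviour on the literals of the deleted variables, and these auxiliaries must be designed so as not to create any $\OBL$-conclusion in $D_\varphi$ that the preservation clause "$\forall q,\ D\vdash+\partial_{\OBL}q\Rightarrow D'\vdash+\partial_{\OBL}q$" would then force us to retain. Second, and more delicately, the clause and target gadgets rely on the exact behaviour of a converted belief rule against a competing obligation rule across every branch of the $\pm\partial_X$ proof conditions, including the superiority and $\Conflict$ bookkeeping; making this mechanism fire on precisely the intended propositional condition, in both $D$ and $D'$, and simultaneously meeting all five preconditions of the Restoring Sociality instance, is where the real work lies. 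By contrast, NP membership and the polynomial bound on the reduction are routine.
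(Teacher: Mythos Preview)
The paper does not prove this theorem at all: it is quoted from \cite{jaamas:bio} as an established result (note the citation in the theorem header) and then used as a black box in the reduction leading to Theorem~\ref{thm:NP-complete}. There is therefore no in-paper proof to compare your proposal against.

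Taken on its own, your NP-membership argument is correct and entirely standard: guess $I'$, rebuild $D'$, compute both extensions in linear time, and scan. The hardness direction, however, is a plan rather than a proof. You correctly identify \textsc{Sat} as the natural source and you correctly locate the difficulty in the clause/target gadgets and the five instance preconditions, but you then stop at the point where the actual work begins. Two concrete gaps: first, under your encoding ``keep $\INT x_i$ $\Leftrightarrow$ set $x_i$ true'', dropping $\INT x_i$ only yields $-\partial_{\INT}x_i$, not $+\partial_{\INT}\non x_i$; your clause gadget needs the latter to detect negative literals, and the sentence ``belief rules that may convert to intentions propagate these choices to literals standing for the variables and their negations'' does not explain how this default-to-the-negation is achieved without creating side effects on the $\OBL$ extension. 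Second, the assertion that ``no other $\OBL$-conclusion of $D_\varphi$ depends on which intentions are dropped'' is exactly what has to be engineered and checked, not assumed --- each auxiliary $c_j$ is itself an $\OBL$-conclusion of $D$ or not depending on your design, and the preservation clause quantifies over all $q$. As written, this is a reasonable sketch of the shape a proof would take, but the reduction is not constructed and none of its claimed properties are verified.
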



\section{Dialogue Games} 
\label{sec:dialogue_games}

The form of a \emph{dialogue game} involves a sequence of interactions between
two players, the \emph{Proponent} $\pr$ and the \emph{Opponent} $\op$. The
content of the dispute being that $\pr$ attempts to assess the validity
of a particular thesis (called \emph{critical literal} within our framework),
whereas $\op$ attacks $\pr$'s claims in order to refute such
thesis. We shift such position in our setting by stating that the Opponent has
the burden of proof on the opposite thesis, and not just the duty to refute the
Proponent's thesis.

The challenge between the parties is formalised by means of \emph{argument}
exchange. In the majority of concrete instances of argumentation frameworks,
arguments are defined as chains of reasoning based on facts and rules captured
in some formal language (in our case, a defeasible derivation $P$).
Each party adheres to a particular set of game rules as defined below.

The players partially shares knowledge of a defeasible theory. Each
participant has a private knowledge regarding some rules of the theory. Other
rules are known by both parties, but this set may be empty. These rules along
with all the facts of the theory and the superiority relation represent the
common knowledge of both participants.

By putting forward a private argument during a step of the game, the agent
increases the common knowledge by the rules used within the argument just
played.

Define the argument theory to be $D_{\Arg} = (\FACTS, R, >)$ such that i. $R=
R_{\pr} \cup R_{\op} \cup R_{\com}$, ii. $R_{\pr}$ ($R_{\op}$) is the private
knowledge of the Proponent (Opponent), and iii. $R_{\com}$ is the (possibly
empty) set of rules known by both participants. We use the superscript notation
$D_{\Arg}^{i}$, $R_{\pr}^{i}$, $R_{\op}^{i}$, and $R_{\com}^{i}$ to denote such
sets at turn $i$.

We assume that $D_{\Arg}$ is coherent and consistent, i.e., there is no literal
$p$ such that: i. $D_{\Arg}\vdash \pm \partial p$, and ii. $D_{\Arg}\vdash +
\partial p$ and $D_{\Arg}\vdash + \partial\non p$.

We now formalise the game rules, that is how the common theory $D_{\Arg}^{i}$
is modified based on the move played at turn $i$.

The parties start the game by choosing the critical literal $l$ to discuss
about: the Proponent has the burden to prove $+\partial l$ by using the current
common knowledge along with a subset of $R_{\pr}$, whereas the Opponent's final
goal is to prove $+\partial\non l$ using $R_{\op}$ instead of $R_{\pr}$.

The players may not present arguments in parallel: they take turn in making
their move.

The repertoire of moves at each turn just includes 1) putting forward an
argument, and 2) passing.

When putting forward an argument at turn $i$, the Proponent (Opponent) may
bring a demonstration $P$ whose terminal literal differs from $l$ ($\non l$).
When a player passes, she declares her defeat and the game
ends. This happens when there is no combination of the remaining private rules
which proves her thesis.

%

Hence, the initial state of the game is $T_{\Arg}^{0}=
(\FACTS,R^{0}_{\com},>)$ with $R^{0}_{\com}= R_{\com}$, and $R^{0}_{\pr}=
R_{\pr}$, $R^{0}_{\op}= R_{\op}$. 

If $T^{0}_{\Arg}\vdash + \partial l$, the Opponent starts the game. Otherwise,
the Proponent does so.

At turn $i$, if Proponent plays $R^{i}_{\Arg}$, then
\begin{itemize}
	\item $T^{i-1}_{\Arg}\vdash + \partial \non l$ ($T^{i-1}_{\Arg}\vdash - \partial l$ if $i=1$);
	\item $R^{i}_{\Arg}\subseteq R^{i-1}_{\pr}$;
	\item $T^{i}_{\Arg} = (F,R^{i}_{\com},>)$;
	\item $R^{i}_{\pr} = R^{i-1}_{\pr}\setminus R^{i}_{\Arg}$, $R^{i}_{\op} = R^{i-1}_{\op}$, and $R^{i}_{\com} = R^{i-1}_{\com}\cup R^{i}_{\Arg}$;
	\item $T^{i}_{\Arg}\vdash + \partial l$.
\end{itemize}

At turn $i$, if Opponent plays $R^{i}_{\Arg}$, then
\begin{itemize}
	\item $T^{i-1}_{\Arg}\vdash + \partial l$;
	\item $R^{i}_{\Arg}\subseteq R^{i-1}_{\op}$;
	\item $T^{i}_{\Arg} = (F,R^{i}_{\com},>)$;
	\item $R^{i}_{\pr} = R^{i-1}_{\pr}$, $R^{i}_{\op} = R^{i-1}_{\op}\setminus R^{i}_{\Arg}$, and $R^{i}_{\com} = R^{i-1}_{\com}\cup R^{i}_{\Arg}$;
	\item $T^{i}_{\Arg}\vdash + \partial\non l$.
\end{itemize}

\subsection{Strategic Argumentation Problem} 
\label{subsec:strategic_argumentation_problem}

\textsc{Proponent's instance for turn $i$:} Let $l$ be the critical literal, $R^{i-1}_{\pr}$ be the set of the private rules of the Proponent, and
$T^{i-1}_{\Arg}$ be such that either $T^{i-1}_{\Arg} \vdash -\partial l$ if $i=1$, or $D^{i-1}_{\Arg} \vdash +\partial \non l$ otherwise.

\noindent \textsc{Question:} Is there a subset $R^{i}_{\Arg}$ of
$R^{i-1}_{\pr}$ such that $D^{i}_{\Arg} \vdash +\partial l$?

\smallskip

\noindent\textsc{Opponent's instance for turn $i$:} Let $l$ be the critical literal, $R^{i-1}_{\op}$ be the set of the private rules of the Opponent, and
$D^{i-1}_{\Arg}$ be such that $D^{i-1}_{\Arg} \vdash +\partial l$.

\noindent \textsc{Question: } Is there a subset $R^{i}_{\Arg}$ of
$R^{i-1}_{\op}$ such that $D^{i}_{\Arg} \vdash +\partial \non l$?



\section{Reduction} 
\label{sec:reduction}


%
%
%
%

%

We now show how to transform Agent Logic (Section \ref{subsec:agent_logic}) into Argumentation Logic (Section \ref{subsec:argumentation_logic}). Basically, we need to act by transforming both literals and rules: whereas the agent theory deals with three different modes of rules and modal literals, the argumentation theory has rules without modes and literals.

The two main ideas of transformations proposed in
Definitions~\ref{def:langTransf} and \ref{def:theoryTransf} are
\begin{itemize}
	\item Flatten all modal literals with respect to internal negations
modalities. For instance, $\non p$ is flattened into the literal $not\_p$,
while $\OBL q$ is $obl\_q$.
	\item Remove modes from rules for $\BEL$, $\OBL$ and $\INT$. Thus, a rule
with mode $X$ and consequent $p$ is transformed into a standard, non-modal rule
with conclusion $Xp$. An exception is when we deal with belief rules, given
that they do not produce modal literals. Therefore, rule $\To_{\OBL} p$ is
translated in $\To obl\_p$, while rule $\To_{\BEL} q$ becomes $\To q$.
\end{itemize}

Function \pflat flattens the propositional part of a literal and syntactically
represents negations; function \fflat flattens modalities.

\begin{defn}\label{def:langTransf}
Let $D_{\soc}$ be a defeasible agent theory. Define two syntactic transformations $\pflat: \LIT_{\soc} \TO \PROP_{arg}$ and $\fflat: \MODLIT_{\soc}\cup \LIT_{\soc} \TO \LIT_{arg}$ as
\[
\Pflat p =
\begin{cases}
p\in \PROP_{arg} &\text{ if } p\in \PROP_{\soc}\\
not\_q\in \PROP_{arg} &\text{ if } p = \neg q,\, q \in \PROP_{\soc}	
\end{cases}
\]
\[
\Flatulo p = 
\begin{cases}
\Pflat q &\text{ if } p=q, \\
obl\_\Pflat q &\text{ if } p= \OBL q\\
\neg obl\_\Pflat q  &\text{ if } p= \neg\OBL q\\
int\_\Pflat q  &\text{ if } p= \INT q\\
\neg int\_\Pflat q  &\text{ if } p=\neg \INT q.\\
\end{cases}
\]
\end{defn}
Given that in BIO a belief modal literal is not $\BEL p$ but simply $p$, we
have that $\Flatulo p = \Pflat p$ whenever the considered mode is $\BEL$, while
$\Flat Xp = \flatx{x}{p}$ if \mbox{$X=\set{\OBL, \INT}$}.

We need to redefine the concept of complement to map BIO modal literals into an
argumentation logic with literals obtained through \fflat. Thus, if $q\in
\PROP_{arg}$ is a literal $p$ then $\non q$ is $not\_p$; and if $q$ is
$not\_p$, then $\non q$ is $p$. Moreover, if $q\in\LIT_{arg}$ is $\flatx{x}{p}$
then $\non q = \flatx{x}{\non p}$; and $q$ is $\neg \flatx{x}{p}$ then $\non q
= \flatx{x}{p}$.

We now propose a detailed description of facts and rules introduced by
Definition~\ref{def:theoryTransf}.

In the ``restoring sociality problem'' we have to select a subset of factual
intentions, while in the ``strategic argumentation problem'' we choose a subset
of rules to play to defeat the opponent's argument. Therefore, factual
intentions are modelled as strict rules with empty antecedent ($r_{p}$), while
factual beliefs and obligations are facts of $D_{arg}$.

We recall that, while proving $\pm\#_{X}q$, a rule in BIO may fire if either is
of mode $X$, through $\Convert$, or through $\Conflict$. Hence, a rule $r$ in
$D_{\soc}$ has many counterparts in $D_{arg}$.

Specifically, $r_{fl}$ is built from $r$ by: removing the mode, and flattening
each antecedent of $r$ as well as the consequent $p$ which in turn embeds the
mode introduced by $r$.

Moreover, if $r\in R^{\BEL}[p]$ then it may be used through conversion to
derive $Xp$. To capture this feature we introduce a rule $r_{Cvx}$ with
conclusion $\flatx{x}{p}$ and where for each antecedent $a\in A(r)$ the
corresponding in $A(r_{Cvx})$ is $\flatx{x}{a}$ according either to clause (3)
of $+\Delta_{X}$ or to condition 2. of Definition~\ref{def:applicable}.

In $D_{\soc}$, it is easy to determine which rule may fire against one another,
being that consequents of rules are non-modal literals. Even when the rules
have different modes and the conflict mechanism is used, their conclusions are
two complementary literals. Given the definition of complementary literals
obtained through \fflat we have introduced after
Definition~\ref{def:langTransf}, this is not the case for the literals in
$D_{arg}$. The situation is depicted in the following theory.

\begin{center}
	
\begin{tabular}{l@{\hskip 1cm}l}
$r: a \To_{\OBL}p$ & $r_{fl}: a \To obl\_p$\\
$s: b \To_{\INT}\neg p$ & $s_{fl}: b \To int\_not\_p$\\
$t: c \To_{\BEL} p$ & $t_{fl}: c \To p$.
\end{tabular}
\end{center}
Here, $r$ may fire against $s$ through $\Aconf{\OBL}{\INT}$ while $r_{fl}$
cannot, given that $obl\_p$ is not the complement of $int\_not\_p$. In the same
fashion, if we derive $+\partial_{\BEL} c$ then $t$ may fire against $s$
because of $\Aconf{\BEL}{\INT}$, while if we have either $+\partial_{\OBL} c$
or $+\partial_{\INT} c$ then the conflict between beliefs and intentions is
activated by the use of $r$ through either $\Aconv{\BEL}{\OBL}$ or
$\Aconv{\BEL}{\INT}$, respectively. Nonetheless, in both cases there is no
counterpart of $t$ in $D_{arg}$ able to fire against $int\_not\_p$.

To obviate this issue, we introduce a defeater $r_{CfOI}$ where we flatten the
antecedents of $r$ and the conclusion is the intention of the conclusion of
$r$, namely $\flatx{int}{C(r)}$. This means that when $r$ fires, so does
$r_{CfOI}$ attacking $s_{fl}$. Notice that being $r_{CfOI}$ a defeater, such a
rule cannot derive directly $+\partial \flatx{int}{p}$ but just prevents the
opposite conclusion. The same idea is adopted for rules $r_{Cfbelx}$ and
$r_{CvyCfx}$: defeaters $r_{Cfbelx}$ are needed to model conflict between
beliefs and intentions (as rule $t$ in the previous example), whereas defeaters
$r_{CvyCfx}$ take care of situations where $r\in R^{Z}$ may be used to convert
$Z$ into $Y$ and $Z$ prevails over $X$ by \Conflict.

Thus in the previous example, we would have: $r_{CfOI}: a \defeater int\_p$,
$t_{Cfbelint}: c \defeater int\_p$, $t_{Cfbelint}: c \defeater int\_p$,
$t_{CvxCfint}: x\_c \defeater int\_p$, with $x\in \set{obl, int}$.

Antecedents in BIO may be negation of modal literals; in that framework, a
theory proves $\neg Xp$ if such theory rejects $Xp$ (as stated by condition 3.
of Definition~\ref{def:provable}). In $D_{arg}$ we have to prove $\neg
\flatx{x}{p}$ This is mapped in $D_{arg}$ through conditions
\ref{def2.dum-xp}--\ref{def2.neg-xp} of Definition~\ref{def:theoryTransf} and
the last condition of $>$.

\begin{defn}\label{def:theoryTransf}
	Let \mbox{$D_{\soc} =(\FACTS_{\soc}, R^{\BEL}, R^{\OBL}, R^{\INT}, >_{\soc},
	\mathcal{V}, \mathcal{F})$} be a defeasible agent theory. Define \mbox{$D_{arg} = (\FACTS,R,>)$} an argumentation defeasible theory such that 

\begin{align}
\FACTS &= \set{\Flatulo p| p \in \FACTS_{\soc}, p \in \LIT \text{ or } p=\OBL q}\label{def2.F}\\
R &= \set{r_{p}: { } \TO int\_\Pflat p| \INT p \in \FACTS_{\soc}}\label{def2.p}\\
&\cup \set{r_{fl}: \bigcup_{a\in A(r)} \Flatulo a \hookrightarrow \Flatulo p| r\in R^{X}[q], \nonumber\\ 
&\quad X=\BEL  \text{ and } p=q, \text{ or } p=Xq \in \MODLIT}\label{def2.fl}\\
&\cup \set{r_{Cvx}: \bigcup_{a\in A(r)} x\_\Pflat{a} \hookrightarrow \flatx{x}{p} | r\in R_{\sd}^{\BEL}[p], \nonumber\\
&\quad A(r) \neq \emptyset, A(r)\subseteq \LIT, x \in \set{obl, int}}\label{def2.Conv}\\
&\cup \set{r_{CvyCfx}: \bigcup_{\flatx{y}{a}\in A(r_{Cvy})} \flatx{y}{a} \defeater \flatx{x}{p} | \nonumber\\
&\quad r_{Cvy}\in R[\flatx{y}{p}], x, y\in \set{obl, int}, x\neq y}\label{def2.CvxCfy}\\
&\cup \set{r_{Cfbelx}: \bigcup_{a \in A(r)}\Flatulo{a} \defeater \flatx{x}{p} | r\in R^{\BEL}[p],\nonumber\\ 
&\quad x\in \set{obl, int}}\label{def2.ConflBelX}\\
&\cup \set{r_{CfOI}:  \bigcup_{a \in A(r)}\Flatulo{a} \defeater \flatx{int}{p} | r\in R^{\OBL}[p]}\label{def2.ConflOI}\\
&\cup \set{r_{dum-xp}: \flatx{x}{p} \To xp  | r\in R^{Y}. \neg Xp\in A(r)}\label{def2.dum-xp}\\
&\cup \set{r_{dum-negxp}: { } \To \non xp | r_{dum-xp}\in R}\label{def2.dum-negxp}\\
&\cup \set{r_{neg-xp}: \non xp \To \neg \flatx{x}{p} | r_{dum-negxp}\in R}\label{def2.neg-xp}\\
> &= \set{(r_{\alpha},s_{\beta})| (r,s) \in >_{\soc}, \alpha,\beta\in\{fl, Cvx, CvxCfy,\nonumber\\
&\quad Cfbelx,CfOI}\}\nonumber\\
&\cup \set{(r_{fl},s_{neg-xp})| r_{fl}\in R[\flatx{x}{p}]}\nonumber\\
&\cup \set{(r_{dum-xp},s_{dum-negxp})| r_{dum-xp},s_{dum-negxp} \in R}.\label{def2.sup}
\end{align}
We name $D_{\Arg}$ the \emph{argumentation counterpart} of $D_{\soc}$.
\end{defn}

The following result is meant to prove the correctness of the
transformation given in Definition~\ref{def:theoryTransf}. This is the case
when the transformation preserves the positive and negative provability for any
given literal.

\begin{thm}\label{thm:transSound}
	Let $D_{\soc} =(\FACTS_{\soc}, R^{\BEL}, R^{\OBL}, R^{\INT}, >_{\soc},
\mathcal{V}, \mathcal{F})$ be a defeasible agent theory and $D_{\Arg} =
(\FACTS,R,>)$ the argumentation counterpart of $D_{\soc}$. Given $p\in \LIT \cup \MODLIT$ and $\# = \set{\Delta, \partial}$:
\begin{enumerate}
	\item $D_{\soc}\vdash \pm \#_{\BEL} p$ iff $D_{\Arg}\vdash \pm \# \Flatulo p$;
	\item $D_{\soc}\vdash \pm \#_{X} p$ iff $D_{\Arg}\vdash \pm \# \Flat Xp$, $X\in \set{\OBL, \INT}$.
\end{enumerate}	
\end{thm}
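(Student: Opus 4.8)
The plan is to prove Theorem~\ref{thm:transSound} by simultaneous induction on the length of the derivation $P$, handling the four tags $+\Delta_X$, $-\Delta_X$, $+\partial_X$, $-\partial_X$ together (for both the $\soc$-theory and its argumentation counterpart), since the positive and negative cases are interdefined via strong negation. The two directions of each biconditional are established in parallel: from a $\soc$-derivation step I construct the corresponding $\Arg$-derivation step and vice versa. Throughout, the bookkeeping is governed by Definitions~\ref{def:langTransf} and~\ref{def:theoryTransf}: a $\BEL$-conclusion $p$ corresponds to $\Flatulo p$, an $\OBL$- or $\INT$-conclusion $Xq$ corresponds to $\flatx{x}{q}$, and each $\soc$-rule $r$ has the family of counterparts $r_{fl}$, $r_{Cvx}$, $r_{CvyCfx}$, $r_{Cfbelx}$, $r_{CfOI}$, plus the ``dummy'' rules $r_{dum-xp}$, $r_{dum-negxp}$, $r_{neg-xp}$ that encode negative modal antecedents.

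First I would dispatch the $\Delta_X$ cases, which are the warm-up: clause (1) of $+\Delta_X$ (facts) corresponds to condition~(\ref{def2.F}) for $\BEL$/$\OBL$ facts and to the empty-bodied strict rule $r_p$ from~(\ref{def2.p}) for primitive intentions; clause (2) (same-mode strict rule) corresponds to $r_{fl}$ from~(\ref{def2.fl}); clause (3) (conversion) corresponds to $r_{Cvx}$ from~(\ref{def2.Conv}). In each direction the inductive hypothesis is applied to the antecedents, and one checks that ``$a$ is $\Delta$-provable with modality $X$'' in $D_{\soc}$ matches ``$\flatx{x}{a}$ (resp.\ $\Flatulo a$) is $\Delta$-provable'' in $D_{\Arg}$ via Definition~\ref{def:provable} and the complement convention introduced after Definition~\ref{def:langTransf}. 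The $-\Delta_X$ case is then the strong-negation dual and needs no separate argument beyond observing that the transformation commutes with strong negation.

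The bulk of the work is the $\partial_X$ cases, and the main obstacle is clause (2.3)/(2.3.1): verifying that the attack–counterattack structure is faithfully mirrored. The subtlety, already flagged in the discussion preceding Definition~\ref{def:theoryTransf}, is that in $D_{\soc}$ two rules of different modes with complementary (non-modal) consequents conflict via $\Conflict$, whereas their naive flattenings $r_{fl}$, $s_{fl}$ have consequents like $obl\_p$ and $int\_not\_p$ that are \emph{not} complementary in $D_{\Arg}$. This is exactly why the defeaters $r_{CfOI}$ (obligation attacking intention), $r_{Cfbelx}$ (belief attacking a converted mode), and $r_{CvyCfx}$ (conversion-then-conflict) were introduced, together with the superiority pairs in~(\ref{def2.sup}). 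So for a given $\pm\partial_X q$ I would enumerate: which $\soc$-rules are applicable for $q$ with modality $X$ (same mode, or $\BEL$ via conversion) — these give the applicable rules for $\flatx{x}{q}$ in $D_{\Arg}$ among $r_{fl}$/$r_{Cvx}$; which $\soc$-rules attack (applicable rules for $\non q$ that conflict with mode $X$) — these give, via the appropriate defeater counterpart, the applicable rules for $\non\flatx{x}{q}$ in $D_{\Arg}$; and then check that the superiority relation lines up so that a $\soc$-side team-defeat by a stronger $t>s$ becomes exactly a $\Arg$-side team-defeat, using the $>$ clauses of Definition~\ref{def:theoryTransf}. The fact that defeaters cannot themselves produce $+\partial$ (only block the complement) must be tracked carefully: it is precisely what makes the $\Conflict$-induced attacks ``one-directional'' in $D_{\Arg}$, matching the asymmetry of $\Conflict$ in $D_{\soc}$.

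Finally I would handle negative modal literals in antecedents separately: an antecedent $\neg Xp$ of a $\soc$-rule is satisfied iff $D_{\soc}$ rejects $Xp$, i.e.\ $D_{\soc}\vdash -\partial_X p$; on the $\Arg$-side this is routed through the gadget $r_{dum-xp}: \flatx{x}{p}\To xp$, the fact-like $r_{dum-negxp}: {}\To \non xp$, and $r_{neg-xp}: \non xp \To \neg\flatx{x}{p}$, with the superiority $(r_{dum-xp},s_{dum-negxp})$ and $(r_{fl},s_{neg-xp})$ from~(\ref{def2.sup}). One shows $D_{\Arg}\vdash +\partial\neg\flatx{x}{p}$ iff $D_{\Arg}\vdash -\partial\flatx{x}{p}$ iff (by IH) $D_{\soc}\vdash -\partial_X p$, so that applicability of the flattened rule in $D_{\Arg}$ coincides with applicability of $r$ in $D_{\soc}$. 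Assembling these pieces — $\Delta_X$ cases, the $\partial_X$ attack/counterattack correspondence, and the negative-modal-antecedent gadget — and invoking strong negation to cover the negative tags completes the induction.
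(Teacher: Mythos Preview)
Your proposal is correct and follows essentially the same approach as the paper: induction on the length of a derivation, first handling the $\pm\Delta_X$ cases via the correspondence between clauses (1)--(3) of $+\Delta_X$ and conditions (\ref{def2.F})--(\ref{def2.Conv}) of Definition~\ref{def:theoryTransf}, then the $\pm\partial_X$ cases by matching the attack/counterattack structure through the defeaters $r_{CfOI}$, $r_{Cfbelx}$, $r_{CvyCfx}$ and the superiority clauses in (\ref{def2.sup}), with the negative-modal-antecedent gadget (\ref{def2.dum-xp})--(\ref{def2.neg-xp}) treated separately and strong negation invoked for the negative tags. The paper's proof is simply a fully unrolled instance of the plan you describe.
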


\begin{proof}\label{pf:transSound}
	The proof is by induction on the length of a derivation $P$. For the
inductive base, we consider all possible derivations of length 1 for a given
literal $q$. Given the proof tags' specifications as in Definitions \ref{def:langTransf} and \ref{def:theoryTransf},
the inductive base only takes into consideration derivations for $\pm \Delta$,
since to prove $\pm \partial q$ requires at least 2 steps.
	
\paragraph{$P(1)= +\Delta_{X} q$.} This is possible either when
clause (1), or (2) of $+\Delta_{X}$ in $D_{\soc}$ holds. 

For (1), we have either i. $q \in \FACTS_{\soc}$ and $X=\BEL$ or $ \OBL q \in
\FACTS_{\soc}$ then $\Flatulo q \in \FACTS$ or $\Flat \OBL q \in \FACTS$ by
condition (\ref{def2.F}) of Definition~\ref{def:theoryTransf}; or ii. $\INT q
\in \FACTS_{\soc}$ then there exists $r_{q}\in R_{s}[int\_\Pflat q]$,
$A(r_{q})=\emptyset$, by condition (\ref{def2.p}) of
Definition~\ref{def:theoryTransf}. Cases i. and ii. as seen together state that
either if $X = \BEL$ then $D_{arg}\vdash +\Delta \Flat q$, or $D_{arg}\vdash
+\Delta \Flat X q$ otherwise, by clause (1), or (2) of $+\Delta$ in $D_{arg}$.

Concerning (2) of $+\Delta_{X}$, there exists $r\in R^{X}_{s}[q]$ such that
$A(r)=\emptyset$. Hence, if $X=\BEL$ then we have $r_{fl}\in R_{s}[\Flatulo
q]$, otherwise we have $r_{fl}\in R_{s}[\flatx{x}{q}]$ with $x=\set{obl, int}$,
where both situations follow by condition (\ref{def2.fl}) of
Definition~\ref{def:theoryTransf} and $A(r_{fl})=\emptyset$. Thus,
$D_{arg}\vdash +\Delta \Flat q$ or $D_{arg}\vdash +\Delta \Flat X q$,
respectively, by clause (2) of $+\Delta$ in $D_{arg}$.

\paragraph{$P(1)= +\Delta \Flatulo q$.} This is possible either
when clause (1), or (2) of $+\Delta$ in $D_{arg}$ holds. 

For (1), we have either $\Pflat q \in \FACTS$ with $q=p$ and $p\in \LIT$, or
$obl\_\Pflat p \in \FACTS$ with $q= \OBL p$; hence, by
Definition~\ref{def:langTransf} and condition (\ref{def2.F}) of
Definition~\ref{def:theoryTransf}, we conclude that $p\in \FACTS_{\soc}$ or
$\OBL p \in \FACTS_{\soc}$, respectively. Thus, either $D_{\soc}\vdash
+\Delta_{\BEL} p$ or $D_{\soc}\vdash +\Delta_{\OBL} p$ by clause (1) of
$+\Delta_{X}$ in $D_{\soc}$.

Concerning (2) of $+\Delta_{X}$, we consider if either i. $q=p$ and $p\in \LIT$
or $q = \OBL p$, or ii. $q = \INT p$.

Case i., there exists $r_{fl}\in R_{s}[\Flatulo p]$,
$A(r_{fl})=\emptyset$. Therefore, there exists $r\in R^{X}_{s}[p]$, with
$A(r)=\emptyset$ and $X = \set{\BEL, \OBL}$, by condition (\ref{def2.fl}) of
Definition~\ref{def:theoryTransf}. Thus, $D_{\soc}\vdash +\Delta_{X} p$ by
clause (2) of $+\Delta_{X}$ in $D_{\soc}$. 

Case ii., two possible situations arise: a) There exists $r_{fl}\in
R_{s}[int\_\Pflat p]$, $A(r_{fl})=\emptyset$, then there exists $r\in
R^{\INT}_{s}[p]$, $A(r)=\emptyset$, by condition (\ref{def2.fl}) of
Definition~\ref{def:theoryTransf}; or b) there exists $r_{p} \in
R_{s}[int\_\Pflat p]$, then $\INT p \in \FACTS_{\soc}$ by condition
(\ref{def2.p}) of Definition~\ref{def:theoryTransf}. For a) as well as for b),
$D_{\soc}\vdash +\Delta_{\INT} p$ by clause (2) or (1), respectively, of
$+\Delta_{X}$ in $D_{\soc}$.

\paragraph{$P(1)= -\Delta_{X} q$, $P(1)= -\Delta \Flatulo q$.} Both
demonstrations are the same as and use the same ideas of cases $P(1)=
+\Delta_{X} q$ or $P(1)= +\Delta \Flatulo q$, respectively.

($P(1)= -\Delta_{X} q$) Clause (1) and (2) of $-\Delta_{X}$ in $D_{\soc}$ are
satisfied. Thus, $q \not\in \FACTS_{\soc}$ ($q\in \LIT$) or $Xq \not\in
\FACTS_{\soc}$ and, consequently, neither $\Flatulo q \in \FACTS$ nor
$obl\_\Pflat q\in \FACTS$, and there is no rule $r_{q}$ that proves $\INT q$.
Moreover, for all $r\in R^{X}_{s}[q]$ then $A(r)\neq \emptyset$ and,
accordingly, the same situation holds for all the corresponding rules of type
$r_{fl}$ in $D_{arg}$.

The same reasoning applies for the other direction. 

\paragraph{$P(n+1)= +\Delta_{X} q$.} If $q\in \FACTS_{\soc}$ and $X=\BEL$,
or $Xq \in \FACTS_{\soc}$ and $X=\set{\OBL, \INT}$, then the case is the same as the corresponding inductive base.

If there exists $r\in R^{X}_{s}[q]$ such that $a$ is $\Delta$-provable at
$P(n)$, for all $a\in A(r)$, meaning that: a) There exists $r_{fl}: \bigcup_{a\in A(r)} \Flatulo a \TO \Flatulo q$
with $X=\BEL$, or there exists $r_{fl}: \bigcup_{a\in A(r)} \Flatulo a \TO
\flatx{x}{q}$ with $x \in \set{obl, int}$ by condition (\ref{def2.fl}) of
Definition~\ref{def:theoryTransf}; and b) $\Flatulo a$ is
$\Delta$-provable for all $\Flatulo a \in A(r_{fl})$, by inductive hypothesis.
Hence, $D_{arg}\vdash +\Delta \Flatulo q$ or $D_{arg}\vdash +\Delta \Flat Xq$,
respectively, by clause (2) of $+\Delta$ in $D_{arg}$.

Finally, if clause (3) of $+\Delta_{X}$ in $D_{\soc}$ is the case, then there
exists $r\in R^{Y}_{s}[q]$ such that $\Aconv{Y}{X} \in \mathcal{V}$ and $Xa$ is
$\Delta$-provable at $P(n)$, for all $a\in A(r)$. Thus, there exists $r_{Cvx}\in
R_{s}[\flatx{x}{q}]$ by condition (\ref{def2.Conv}) of
Definition~\ref{def:theoryTransf}, and $\flatx{x}{a}$ is $\Delta$-provable for
all $\flatx{x}{a} \in A(r_{Cvx})$, by inductive hypothesis. Again,
$D_{arg}\vdash +\Delta \Flat Xq$ by clause (2) of $+\Delta$ in $D_{arg}$.

\paragraph{$P(n+1)= +\Delta \Flatulo q$.} This is possible either
when clause (1), or (2) of $+\Delta$ in $D_{arg}$ holds.

If $\Flatulo q\in \FACTS$, then the proof is the same as the corresponding
inductive base.

Otherwise, we consider if either i. $q=p$ and $p\in \LIT$, or ii. $q = Xp$ with
$X=\set{\OBL, \INT}$. 

Case i., there exists $r_{fl}\in R_{s}[\Flatulo p]$, such that $\Flatulo a$ is
$\Delta$-provable at $P(n)$, for all $\Flatulo a \in A(r_{fl})$. Therefore: a)
There exists $r\in R^{\BEL}_{s}[p]$ by condition (\ref{def2.fl}) of
Definition~\ref{def:theoryTransf}; and b) $a$ is $\Delta$-provable for all
$a\in A(r)$ by inductive hypothesis. Thus, $D_{\soc}\vdash +\Delta_{\BEL} p$ by
clause (2) of $+\Delta_{X}$ in $D_{\soc}$.

Case ii. is divided in two sub-cases. First sub-case, there exists $r_{fl}\in
R_{s}[\flatx{x}{p}]$ such that $\Flatulo a$ is $\Delta$-provable at $P(n)$, for
all $\Flatulo a \in A(r_{fl})$. This case is analogous to the previous 
case. Second sub-case, there exists $r_{Cvx}\in R_{s}[\flatx{x}{p}]$, with
$x=\set{obl, int}$, such that $\flatx{x}{a}$ is $\Delta$-provable for all
$\flatx{x}{a} \in A(r_{Cvx})$. Therefore, the following two conditions are
satisfied: a) There exists $r\in R^{\BEL}_{s}[p]$ by condition
(\ref{def2.Conv}) of Definition~\ref{def:theoryTransf}, and b) $Xa$ is
$\Delta$-provable for all $a\in A(r)$ by inductive hypothesis. Thus,
$D_{\soc}\vdash +\Delta_{X} p$ by clause (3) of $+\Delta_{X}$ in $D_{\soc}$.

\paragraph{$P(n+1)= -\Delta_{X} q$.} Clauses (1)--(3) of $-\Delta_{X}$ in
$D_{\soc}$ hold. 

For (1), $q\not\in \FACTS_{\soc}$ ($q\in \LIT$) or $Xq \not\in
\FACTS_{\soc}$. Consequently, neither $\Flatulo q \in \FACTS$ nor
$obl\_\Pflat q\in \FACTS$, and there is no rule $r_{q}$ to support $int\_\Pflat q$. 

For (2), for all $r\in R^{X}_{s}[q]$ there exists $a\in A(r)$ such that $a$
is $\Delta$-rejected  at $P(n)$. Accordingly, for all the corresponding rules of type
$r_{fl}$ in $D_{arg}$, there exists $\Flatulo a\in A(r_{fl})$ which is
$\Delta$-rejected by inductive hypothesis. Hence, we conclude that
$D_{arg}\vdash -\Delta \Flatulo q$ if $X=\BEL$. 

Finally, the same reasoning applies for all rules $r\in R_{s}^{Y}[q]$, with
$\Aconv Y X \in\mathcal{V}$, where there exists $a\in A(r)$ such that $Xa$ is
$\Delta$-rejected at $P(n)$. Thus, we conclude that $D_{arg}\vdash -\Delta
\Flat Xq$, with $X=\set{\OBL, \INT}$.

\paragraph{$P(n+1)= -\Delta \Flat q$.} The proof follows the inductive base and
the case $P(n)= -\Delta_{X} q$.

\paragraph{$P(n+1)= +\partial_{X} q$.} Clauses (1) and (2.1) of
$+\partial_{X}$ have already been proved for the inductive step of
$\pm\Delta_{X}$. 

If clause (2.2) of $+\partial_{X}$ is the case, then there exists $r\in
R_{\sd}[q]$ such that $r$ is applicable at $P(n+1)$ (i.e., $a$ is
$\partial$-provable at $P(n)$ in $D_{\soc}$, for all $a\in A(r)$) and either
clause (2.3) or (2.3.1) is satisfied.

We have two cases. If $r\in R^{X}$ then there exists either $r_{fl}\in
R_{\sd}[\Flatulo q]$ when $X=\BEL$, or $r_{fl}\in R_{\sd}[\flatx{x}{q}]$
otherwise by condition (\ref{def2.fl}) of Definition~\ref{def:theoryTransf}.
Thus, $\Flatulo a$ is $\partial$-provable at $P(n)$ in $D_{arg}$, for all
$\Flatulo a\in A(r_{fl})$ by inductive hypothesis. If $r\in R^{Y}$ and
$X=\set{\OBL, \INT}$, then there exists either $r_{Cvx}\in R_{\sd}[x\_\Flatulo
q]$ by condition (\ref{def2.Conv}) of Definition~\ref{def:theoryTransf}. Hence,
$\flatx{x}{a}$ is $\partial$-provable at $P(n)$ in $D_{arg}$, for all $\Flatulo
a\in A(r_{Cvx})$ by inductive hypothesis. We conclude that clause (2.2) of
$+\partial$ holds in $D_{arg}$ by inductive hypothesis.

For, clause (2.3) if $s\in R[\non q]$ is discarded, then we have the following cases.

\begin{enumerate}[a.]
	\item $s\in R^{X}$, then there exists $a\in A(s)$ which is $\partial$-rejected at $P(n)$. Thus, $\Flatulo a$ is $\partial$-rejected at $P(n)$ in $D_{arg}$ by inductive hypothesis, and therefore $s_{fl}$ is discarded in $D_{arg}$.
	\item $s\in R^{\BEL}$ and $X\in \set{\OBL, \INT}$, then there exists $a\in A(s)$ such that $Xa$ is $\partial$-rejected at $P(n)$. Hence, $\flatx{x}{a}$ is $\partial$-rejected at $P(n)$ in $D_{arg}$ by inductive hypothesis and we conclude that $s_{Cvx}$ is discarded in $D_{arg}$.
	\item $X=\BEL$ and $s\in R^{Z}$ with $Z\in\set{\OBL,\INT}$, or $X=\OBL$ and
$s\in R^{\INT}$. We conclude that $s_{fl}$ is discarded because either $X=\BEL$
and $s_{fl}\not\in R[\non \Flatulo q]$, or $s_{fl}\not\in R[\non \flatx{x}{q}]$
otherwise.
\end{enumerate}

Finally, we consider clause (2.3.1) of $+\partial_{X}$. Following the above
reasoning, if $t$ is applicable in $D_{\soc}$, then $t_{fl}$ or $t_{Cvx}$ is
applicable in $D_{arg}$ as well.

If $t, s\in R^{Z}$ and $t >^{sm}_{\soc} s$, then $t_{\alpha} > s_{\alpha}$ with
$\alpha\in \set{fl, Cvz}$ by condition (\ref{def2.sup}) of
Definition~\ref{def:theoryTransf}. If $X =\BEL$, there is no need for further
analysis given that the transformation does not produce additional rules for
$\Pflat q$, for any literal $q$.

Otherwise, we have either 

\begin{enumerate}[i.]
	\item $t\in R^{\BEL}[q]$ and $s\in R^{X}[\non q]$: thus $t_{Cvx} > s_{fl}$;

	\item $t\in R^{\BEL}[q]$ and $s\in R^{X}[\non q]$: thus $t_{Cfbelx} > s_{fl}$, with $t_{Cfbelx}: \bigcup \Flatulo a \defeater \flatx{x}{q}$;

	\item $s, t\in R^{\BEL}$: thus either a) $t_{Cfbelx} > s_{Cvx}$, or b) $t_{CvyCfx} > s_{Cvx}$, with $t_{CvyCfx}: \bigcup \flatx{y}{a}\defeater \flatx{x}{q}$;
\end{enumerate} 
by condition (\ref{def2.sup}) and $\Aconf{\BEL}{X}$ for i. and ii.,
$t>^{sm}_{\soc} s$ for the last case. It only remains to prove that $t_{Cfbelx}$
and $t_{CvyCfx}$ are applicable in $D_{arg}$. If $t$ is applicable in $D_{\soc}$
at $P(n+1)$, then any $a\in A(t)$ is $\partial$-provable in $D_{\soc}$ at $P(n)$
and so is $\Flatulo a$ in $D_{arg}$ by inductive hypothesis. We conclude that
$t_{Cfbelx}$ is applicable in $D_{arg}$ at $P(n+1)$. Instead, if $t$ is
applicable in $D_{\soc}$ at $P(n+1)$ through $\Aconv{\BEL}{Y}$, then $Ya$ is
$\partial$-provable in $D_{\soc}$ at $P(n)$ for every $a\in A(t)$. By inductive
hypothesis, any $\flatx{y}{a}\in A(t_{CvyCfx})$ is $\partial$-provable as well.
Hence, $t_{CvyCfx}$ is applicable in $D_{arg}$ as well.

This completes the analysis when $s_{\alpha}$ with $\alpha\in \set{fl, Cvx}$;
we now analyse other possible attacks in $D_{arg}$ and first proceed for
$X=\OBL$, then for $X=\INT$.

Suppose there is a rule $w \in R^{\BEL}[\non q]$; $w$ produces rules
$w_{Cfbelx}$ and $w_{CvyCfx}$. In the first case $w$ would fire against $Xq$
due to $\Aconf{\BEL}{X}$. If $w$ is discarded in $D_{\soc}$ at $P(n+1)$, then
there exists $a\in A(w)$ such that $a$ is $\partial$-rejected in $D_{\soc}$ at
$P(n)$. By inductive hypothesis, we conclude that $\Flatulo a \in
A(w_{Cfbelx})$ is $\partial$-rejected in $D_{arg}$ at $P(n)$. Otherwise, $w$ is
defeated by an applicable $t$ in $D_{\soc}$. Assume
there is no $t\in R^{\BEL}[q]$ stronger than $w$. Thus, $D_{\soc} \vdash
-\partial_{X} q$, against the hypothesis. Therefore, $t>_{\soc} w$ and the
corresponding of $t$ in $D$ is stronger than $w_{Cfbelx}$ by construction of
$>$ in Definition~\ref{def:theoryTransf}.

An analogous reasoning applies for $w_{CvyCfx}$. Here, $w$ would be applicable
through $\Aconv{\BEL}{Y}$ and then fire against $Xq$ by $\Aconf{\BEL}{X}$. If
$w$ is discarded, then there exists $a\in A(w)$ such that $Ya$ is
$\partial$-rejected in $D_{\soc}$ at $P(n)$. By inductive hypothesis,
$\flatx{y}{a}$ is $\partial$-rejected in $D_{arg}$ and $w_{CvyCfx}$ is
discarded at $P(n+1)$. Otherwise, $w$ is defeated in
$D_{\soc}$ by an applicable $t\in R^{\BEL}[q]$ either directly, or through
conversion. In both cases, the corresponding rule of $t$ in $D_{arg}$ is
stronger than $w_{CvyCfx}$ by construction of $>$ in
Definition~\ref{def:theoryTransf}. Notice that if $w$ is applicable in
$D_{\soc}$ at $P(n+1)$ then $Ya$ is $\partial$-provable at $P(n)$ for any $a\in
A(w)$ and, consequently, so is $\flatx{y}{a}$ by inductive hypothesis, making
$w_{CvyCfx}$ applicable in $D_{arg}$.


A final analysis is in order when $X=\INT$ and we consider $w_{CfOI}: \bigcup
\Flatulo a \defeater \flatx{int}{\non q}$. The counterpart in $D_{\soc}$ is
$w\in R^{\OBL}[\non q]$, which is either discarded, or defeated by a stronger
rule $t$. Again, if $w$ is discarded in $D_{\soc}$, then there exists $a\in
A(w)$ such that $a$ is $\partial$-rejected in $D_{\soc}$ at $P(n)$. By inductive
hypothesis, $\Flatulo a$ is $\partial$-rejected in $D_{arg}$ and $w_{CfOI}$ is
discarded at $P(n+1)$. Otherwise $w$ is defeated either by an applicable $t\in
R^{\OBL}[q]$, or $t\in R^{\BEL}[q]$ (in this last case directly, or through
$\Aconf{\BEL}{\OBL}$, or through $\Aconf{\BEL}{\INT}$). This relation is
preserved in $>$ between $w_{CfOI}$ and the corresponding rule of $t$ in
$D_{arg}$ by condition (\ref{def2.sup}) of Definition~\ref{def:theoryTransf}.
If the $t$ is in $R^{\OBL}$, then $t_{CfOI}\in R[\flatx{int}{q}]$. Stating that
$t$ is applicable in $D_{\soc}$ at $P(n+1)$ means that every antecedent $a$ is
$\partial$-provable. By inductive hypothesis, so is the corresponding $\Flatulo
a$ in $D_{arg}$, making $t_{CfOI}$ applicable at $P(n+1)$.


\paragraph{$P(n+1)= +\partial \Flatulo q$.} Clauses (1) and (2.1) of
$+\partial$ have already been proved for the inductive base of
$\pm\Delta$. 

If $q=p$ and $p\in \LIT$, then the only rules to consider as support/attack
$\Flatulo p$ are obtained through condition (\ref{def2.fl}) of
Definition~\ref{def:theoryTransf}. Therefore, by
inductive hypothesis, for any applicable rule $r_{fl}$ the corresponding
rule $r$ in $D_{\soc}$ is applicable as well, and the same reasoning holds for
discarded rules. Moreover, the superiority relation is isomorphic for
such rules. Hence, $D_{\soc}\vdash +\partial_{\BEL}p$.

Proofs that if a rule is applicable/discarded in $D_{arg}$ at $P(n+1)$ then so
is the corresponding rule in $D_{\soc}$ at $P(n+1)$, are analogous to the
various cases studied for the inductive step of $+\partial_{X}$. Specifically,
we use $s_\alpha$ for rules captured by the quantifier in clause (2.3) of
$+\partial$ and $t_\beta$ for those in the scope of the quantifier of clause
(2.3.1).

It remains to argue that every applicable attack rule is defeated. By the
construction of the superiority relation, this statement is straightforward for
the rules which have a natural counterpart in $D_{arg}$, i.e., $t_{fl}>s_{fl}$,
$t_{Cvx}>s_{Cvx}$, $t_{Cvy}>s_{Cvx}$ when $\Aconf{Y}{X}\in \mathcal F$,
$t_{Cvy}>s_{Cvx}$.

Suppose $s_{\alpha}\in R[\flatx{x}{\non p}]$, with $\alpha\in\set{CvyCfx,
Cfbelx}$. Such a rule is defeated by $t_{Cvx}\in R[\flatx{x}{p}]$,
$t_{CvyCfx}\in R[\flatx{x}{p}]$, or $t_{Cfbelx}\in R[\flatx{x}{p}]$. All these
rule have the same counterpart rule, namely $t\in R^{\BEL}[p]$, what changes is
how $t$ is made applicable to challenge $s\in R^{\BEL}[\non p]$. Again, due to
construction of $>$ in Definition~\ref{def:theoryTransf}
$t_{\beta}>s_{\alpha}$, $\beta \in \set{Cvx, CvyCfx, Cfbelx}$, are so because
$t>_{\soc}s$.

The case when $s_{CfOI}\in R[\flatx{x}{\non p}]$ differs from the previous one
in that it can be defeated also by $t_{CfOI}$. Once more we have that
$t>^{sm}_{\soc}s$ by construction of $>$.

At last, we analyse the case when $D_{arg}\vdash +\partial \neg \flatx{x}{p}$.
The only rule that may fire to prove $\neg \flatx{x}{p}$ is $r_{neg-xp}$, which
is applicable whenever any rule $r_{dum-xp}$ is discarded at in $D_{arg}$ at
$P(n+1)$, due to conditions \ref{def2.dum-xp}--\ref{def2.neg-xp} and
construction of $>$ in Definition~\ref{def:theoryTransf}. That is the case if
$\flatx{x}{p}$ is $\partial$-rejected in $D_{arg}$ at $P(n)$. By inductive
hypothesis, $D_{\soc}\vdash -\partial_{X} p$ at $P(n)$, thus, by
Definition~\ref{def:provable} clause 3, $\neg Xp$ is $\partial$-provable in
$D_{\soc}$ at $P(n+1)$.

\paragraph{$P(n+1)= -\partial_{X} q$, $-\partial \Flatulo q$.} The main
reasoning follows straightforwardly from the case given that the proof
conditions for $-\partial_{X}$ and $-\partial$ are the strong negation of
$+\partial_{X}$ and $+\partial$, respectively. Clauses (1) and (2.1) of
$-\partial_{X}$ ($-\partial$) have already been proved in the inductive step of
$\pm\Delta_{X}$ ($\pm\Delta$), as well as clauses (2.2)-(2.3) in the inductive
step of $+\partial_{X}$ ($+\partial$).

By construction of the superiority relation given in
Definition~\ref{def:theoryTransf}, if $t\not >_{\soc}
s$, then no superiority relation may exist between any transformations of $t$
and $s$ in $D_{arg}$. 

Finally, concerning $P(n+1) = -\partial_{X}$, we must consider the case when
$\neg Xq$ is $\partial$-provable in $D_{\soc}$ at $P(n+1)$. This is the case
when clause 3. of Definition~\ref{def:provable} is satisfied, i.e., when
$D_{\soc}\vdash -\partial_{X}q$ at $P(n)$. By inductive hypothesis,
$D_{arg}\vdash -\partial \flatx{x}{q}$ at $P(n)$, making rules $r_{dum-xp}$
discarded in $D_{arg}$ at $P(n+1)$. Accordingly, $D_{arg}\vdash +\partial \non
xp$ at $P(n+2)$ and we conclude that rules $r_{neg-xp}$ prove $\neg
\flatx{x}{q}$ at $P(n+3)$.
\end{proof}

In order to show the final result that the Strategic Argumentation Problem is
NP-Complete, we first prove that the proposed transformation is polynomial.

\begin{thm}\label{thm:linearTrans}
	There is a linear transformation from any defeasible agent theory $T_{\soc}$
to its argumentation counterpart $T_{\Arg}$.
\end{thm}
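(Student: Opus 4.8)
The plan is to establish the linear bound by directly inspecting the transformation in Definition~\ref{def:theoryTransf} and counting, for each syntactic component of $D_{\soc}$, how many components it contributes to $D_{\Arg}$. The key observation is that every clause in the definition of $R$ and $>$ is a \emph{local} rewriting: it takes a single rule (or a single pair in $>_{\soc}$, or a single fact) of $D_{\soc}$ and emits a \emph{bounded} number of rules in $D_{\Arg}$, each of bounded size. Since the number of modes is fixed at three, all the set-comprehensions ranging over $x,y\in\set{obl,int}$ or over $X\in\MOD$ are of constant cardinality, so the multiplicative blow-up is a constant.

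First I would handle the language transformation of Definition~\ref{def:langTransf}: \pflat and \fflat are purely syntactic maps that rewrite a literal by prefixing at most two tokens ($not\_$, $obl\_$, $int\_$, or a leading $\neg$), so each literal of $D_{\soc}$ is mapped to a literal of size at most a constant times its own size, computable in linear time. Next I would walk through the clauses defining $R$ in Definition~\ref{def:theoryTransf}: clause (\ref{def2.F}) maps facts to facts one-to-one; clause (\ref{def2.p}) produces one strict rule $r_p$ per primitive intention in $\FACTS_{\soc}$; clauses (\ref{def2.fl}), (\ref{def2.Conv}), (\ref{def2.CvxCfy}), (\ref{def2.ConflBelX}), (\ref{def2.ConflOI}) each produce, per rule $r\in R^{X}$ of $D_{\soc}$, a constant number of rules ($r_{fl}$; at most two $r_{Cvx}$; at most two $r_{CvyCfx}$; at most two $r_{Cfbelx}$; one $r_{CfOI}$), and each emitted rule has antecedent of size $|A(r)|$ and a consequent of constant size, so its size is $O(|r|)$; clauses (\ref{def2.dum-xp})--(\ref{def2.neg-xp}) produce, per occurrence of a negated modal literal $\neg Xp$ in some antecedent, a constant number of small auxiliary rules $r_{dum\text{-}xp}$, $r_{dum\text{-}negxp}$, $r_{neg\text{-}xp}$. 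Summing over all rules and all antecedent occurrences gives $|R| = O(|R^{\BEL}|+|R^{\OBL}|+|R^{\INT}|+|\FACTS_{\soc}|)$ measured in symbols, i.e.\ linear in $|D_{\soc}|$. Finally the superiority relation (\ref{def2.sup}): the first line maps each pair of $>_{\soc}$ to a constant number of pairs (one per choice of the tags $\alpha,\beta$ among the finitely many rule-variants), and the remaining two lines add one pair per $r_{fl}$ and one per $r_{dum\text{-}xp}$ already counted, so $|{>}| = O(|{>_{\soc}}| + |R|)$, again linear.

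I would then conclude that $D_{\Arg}$ is obtained from $D_{\soc}$ by a single pass that, for each of the $O(|D_{\soc}|)$ syntactic items, does a constant amount of work and emits $O(1)$ items each of size linear in the corresponding source item; hence both the running time and the output size are $O(|D_{\soc}|)$, which is precisely the claimed linear transformation. The bookkeeping is entirely routine; the only point that needs a word of care is making sure the constant hidden in the $O(\cdot)$ genuinely does not depend on the theory — this is where one must invoke that $\MOD=\set{\BEL,\INT,\OBL}$ is fixed, so that every comprehension over modes in Definition~\ref{def:theoryTransf} has constant size and the auxiliary-rule machinery ($r_{dum\text{-}xp}$ etc.) introduces only finitely many shapes of rule per source antecedent occurrence. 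No subtler obstacle arises, since unlike the correctness theorem this statement does not require reasoning about derivations at all.
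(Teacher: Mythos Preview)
Your proposal is correct and follows essentially the same approach as the paper: both arguments walk through the clauses of Definition~\ref{def:theoryTransf} and count, for each fact, rule, or superiority pair in $D_{\soc}$, the constant number of items emitted in $D_{\Arg}$, concluding that the blow-up is bounded by a fixed multiplicative constant. Your write-up is in fact a bit more careful than the paper's (you explicitly track the size of emitted rules and note that the constancy hinges on $\MOD$ being fixed), but the underlying idea is identical.
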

\begin{proof}
	The transformation rules of Definition~\ref{def:theoryTransf} are applied
once to each rule and each tuple of the superiority relation. Transformation
rule (1) maps one fact in $T_{\soc}$ into one fact in $T_{\Arg}$. Transformation
rule (2) maps one primitive intention $T_{\soc}$ into one strict rule in
$T_{\Arg}$. Rule (3) and (7) again copy one rule into one rule. Rules (4)--(6)
generate two rules in $T_{\Arg}$ for every belief rule in $T_{\soc}$. Rules
(8)--(10) generate a total of three rules in $T_{\Arg}$ for each negative modal
literal in $T_{\soc}$. Rule (11) generates thirty-two tuples in $T_{\Arg}$ for
each tuple in $>_{\soc}$ and two tuples for each negative modal literal in in
$T_{\soc}$.

The above reasoning shows that the transformation performs a number of steps
that is, in the worst case, smaller than thirty-two times the size of the
defeasible agent theory, and this proves the claim.
\end{proof}

\begin{thm}\label{thm:NP-complete}
	The Strategic Argumentation Problem is NP-Complete.
\end{thm}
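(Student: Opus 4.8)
The plan is to establish membership in NP and NP-hardness separately. For membership, a witness for the Proponent's instance at turn $i$ (the Opponent's case is symmetric) is simply the set $R^{i}_{\Arg}\subseteq R^{i-1}_{\pr}$ to be played, whose size is bounded by the size of the input. To verify it one builds $D^{i}_{\Arg}=(\FACTS,R^{i-1}_{\com}\cup R^{i}_{\Arg},>)$ and tests whether $+\partial l$ belongs to its extension; since the extension calculus of an argumentation theory is linear in the size of the theory (as recalled in Section~\ref{subsec:agent_logic}), this check runs in polynomial time. Hence the Strategic Argumentation Problem is in NP.

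For hardness I would reduce the Restoring Sociality Problem, which is NP-complete by Theorem~\ref{th:NP-complete}. Given an instance $\langle I,\OBL p,D\rangle$, build the argumentation counterpart $D_{\Arg}=(\FACTS,R,>)$ of $D$ through Definition~\ref{def:theoryTransf}; by Theorem~\ref{thm:linearTrans} this is linear. Put the strict rules generated from the primitive intentions, $\{\,r_{q}: {}\TO int\_\Pflat q\mid\INT q\in I\,\}$, into the Proponent's private knowledge $R_{\pr}$, set $R_{\op}=\emptyset$, and declare every other rule, together with $\FACTS$ and $>$, to be common knowledge. A play $S\subseteq R_{\pr}$ then corresponds to the subset $I'=\{q\mid r_{q}\in S\}$ of $I$, the common theory after this move is exactly the argumentation counterpart of the agent theory $D'$ obtained from $D$ by replacing $I$ with $I'$, and Theorem~\ref{thm:transSound} yields $D^{1}_{\Arg}\vdash+\partial \Flatulo{\OBL q}$ iff $D'\vdash+\partial_{\OBL}q$ for every literal $q$. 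So, to a first approximation, taking the critical literal to be $\Flatulo{\OBL p}$ already makes ``the Proponent can win at move $1$'' coincide with ``some $I'$ makes $\OBL p$ derivable in $D'$''.

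\emph{The main difficulty} lies in matching the two further requirements hidden in the Restoring Sociality question: first, that question also demands that \emph{every} obligation $q$ with $D\vdash+\partial_{\OBL}q$ still be derivable in $D'$, whereas a Strategic Argumentation instance targets a single literal; and second, the turn-$1$ precondition requires $T^{0}_{\Arg}\vdash-\partial l$, which can fail for $l=\Flatulo{\OBL p}$ because dropping intentions may by itself already make $\OBL p$ derivable. I would fix both with a small gadget: precompute the set $Q=\{q\mid D\vdash+\partial_{\OBL}q\}$ (polynomially sized, by linearity of the extension), add fresh atoms $g^{*}$ and $l$, add a strict rule $r^{*}: {}\TO g^{*}$ to $R_{\pr}$, and add to the common knowledge one defeasible rule with consequent $l$ whose antecedents are $g^{*}$, all $\Flatulo{\OBL q}$ for $q\in Q$, and $\Flatulo{\OBL p}$. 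Since $l$, $\non l$ and $g^{*}$ occur nowhere else, $+\partial l$ holds precisely when $r^{*}$ is played and all of the listed $\Flatulo{\OBL q}$ and $\Flatulo{\OBL p}$ are $\partial$-provable, i.e.\ precisely when the played $I'$ restores $\OBL p$ without dropping any original obligation; moreover $g^{*}$, hence $l$, is unreachable before any private rule is played, so $T^{0}_{\Arg}\vdash-\partial l$. Finally, the only play corresponding to $I'=I$ is ``play all of $R_{\pr}$'', which reproduces $D$ and fails since $D\vdash-\partial_{\OBL}p$, so ``some subset wins'' is equivalent to ``some proper subset $I'$ witnesses the instance''. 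What remains is routine bookkeeping: checking that the three fresh rules do not perturb the provability correspondence of Theorem~\ref{thm:transSound} and that the whole construction is produced in polynomial time.
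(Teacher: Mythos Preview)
Your argument is correct and follows the same overall strategy as the paper: membership in NP via guess-and-check using the linear extension calculus, and hardness by reducing Restoring Sociality through the argumentation-counterpart construction of Definition~\ref{def:theoryTransf}, appealing to Theorems~\ref{thm:transSound} and~\ref{thm:linearTrans}.

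Where you differ is in the level of detail. The paper's hardness argument is a two-line sketch: it simply says the deviant agent theory is mapped into its argumentation counterpart and cites polynomiality and correctness of that transformation. It does not spell out how the counterpart is partitioned into $R_{\pr}$, $R_{\op}$ and $R_{\com}$, what the critical literal is, nor how the side condition ``every previously derivable obligation is preserved'' and the turn-$1$ precondition $T^{0}_{\Arg}\vdash-\partial l$ are enforced. Your gadget with the fresh atoms $g^{*}$ and $l$ and the single defeasible rule collecting $\Flatulo{\OBL p}$ together with all $\Flatulo{\OBL q}$ for $q\in Q$ is a clean way to discharge both of these points explicitly; it is not in the paper, but it is exactly the kind of bookkeeping the paper's sketch leaves to the reader. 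Your observation that playing all of $R_{\pr}$ reproduces $D$ and hence fails, so that any winning play corresponds to a \emph{proper} $I'\subsetneq I$, is also a detail the paper does not mention. In short: same reduction, but you have made rigorous what the paper treats as evident.
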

\begin{proof}
	First, the Strategic Argumentation Problem is polynomially solvable on
non-deterministic machines since, given a defeasible argumentation theory
$D_{\Arg}$, we guess a set of rules $R_{\Arg}^{i}$ and we can check the
extension in polynomial time \cite{DBLP:journals/tplp/Maher01}.
	
	Second, the Strategic Argumentation Problem is NP-hard. In fact, we map the
Restoring Sociality Problem \cite{jaamas:bio} into the Strategic Argumentation
Problem. Given a (deviant) defeasible agent theory $D_{\soc}$, $D_{\soc}$ is
mapped into its argumentation counterpart $D_{\Arg}$
(Definition~\ref{def:theoryTransf}). The transformation is polynomial
(Theorem~\ref{thm:linearTrans}) and correct (Theorem~\ref{thm:transSound}).
\end{proof}


\subsection{Discussion}
\label{ssec:discussion}
In this paper we concentrated in a game with a symmetry on what the two parties
have to prove: $\pr$ has to prove $l$ (i.e., $+\partial l$) while $\op$ has to
prove $\non l$ (i.e., $+\partial \non l$); however, it is possible to have
games where the two parties have different burden on proof, namely, the
proponent $\pr$ has to prove $l$ and the opponent $\op$ has to disprove it. In
Defeasible Logic this can be achieve either by proving that the opposite holds,
namely $+\partial \non l$ or simply by showing that $l$ is not provable, i.e.,
$-\partial l$. In this case we have two different types of strategic
argumentation problems: one for the proponent (which is the same as the current
one), and one for the opponent. For the opponent, the related decision problem
is if there exists a subset of her private rules such that adding it to current
public rule make that the resulting theory proves $-\partial l$. The proof
conditions for $+\partial$ and $-\partial$ are the strong negation of each
other \cite{ecai2000-5}; hence this version of the strategic argumentation
problem is coNP-complete.

The NP-completeness result of the paper is proved for the ambiguity blocking,
team defeat variant of Defeasible Logic. However, the proof of the result does
not depend on the specific features of this particular variant of
the logic, and the result extends to the other variants of the logic (see
\cite{inclusion} for the definition of the various variants). The version of
the argumentation logic presented in this paper does not correspond to the
grounded semantics for Dung's style abstract argumentation framework (though it
is possible to give such a semantics for it, see \cite{GovernatoriMAB04}).
However, the ambiguity blocking variant corresponds to Dung's grounded
semantics \cite{GovernatoriMAB04}. Accordingly, strategic argumentation seems
to be a computationally infeasible problem in general.

Finally, in our game we chose that the superiority relation is known \emph{a
priori} by both players. If not so, the problem reduces to revising the
corresponding Agent Logic by changing a combination of rules and superiority
relation. The problem of revising a defeasible theory by only modifying the
superiority relation has proven to be NP-complete in \cite{Pref}.

\section{Summary} 
\label{sec:related_works_&_conclusions}

Almost all research in AI on argumentation assumes that strategic dialogues are
games of complete information, i.e., dialogues where the structure of the game
is common knowledge among the players. Following \cite{OkunoT09,Sato:2011} we
argue that argument games work under incomplete information: each player does
not know the other player's knowledge, thus she cannot predict which arguments
are attacked and which counterarguments are employed for attacking the
arguments; hence, argument moves can disclose such private information, thus
allowing the other player to attack.

While it is outside the scope of this paper how to analyse strategic dialogues
in game-theoretic terms, our research effort is preliminary to this analysis,
since it studies the computation cost for logically characterising the problem
that any argumentation game with incomplete information potentially rises. We
have shown that the problem of deciding what set of rules to play (``Strategic
Argumentation Problem'') at a given move is NP-complete even when the problem
of deciding whether a given theory (defeasibly) entails a literal can be
computed in polynomial time. To this end, we mapped the NP-complete ``Restoring
Sociality Problem'' proposed in \cite{jaamas:bio} into the strategic
argumentation problem.







\bibliography{biblio,biblio-NINO}
\bibliographystyle{aaai}
\end{document}